    \def\qed{\hfill$\sqcap\kern-8.0pt\hbox{$\sqcup$}$\\}
    \def\beq{\begin{eqnarray}}
    \def\eeq{\end{eqnarray}}
    \def\beqq{\begin{eqnarray*}}
    \def\eeqq{\end{eqnarray*}}
    \def\re{\textnormal {Re}}
    \def\im{\textnormal {Im}}
    \def\p{{\mathbb P}}
    \def\q{{\mathbb Q}}
    \def\e{{\mathbb E}}
    \def\r{{\mathbb R}}
    \def\c{{\mathbb C}}
    \def\d{{\textnormal d}}
    \def\i{{\textnormal i}}
    \def\mm{{\mathcal M}}
    \def\ee{{\textnormal e}}
\newtheorem{theorem}{Theorem}
\newtheorem{lemma}{Lemma}
\newtheorem{proposition}{Proposition}
\newtheorem{corollary}{Corollary}
\theoremstyle{definition}
\newtheorem{definition}{Definition}
\newtheorem{remark}{Remark}
\title{Exponential functionals of L\'evy processes and variable annuity guaranteed benefits}
\author{
{Runhuan Feng
\footnote{Department of Mathematics, University of Illinois at Urbana-Champaign, USA. Email: rfeng@illinois.edu}} \; ,
{Alexey Kuznetsov
\footnote{Department of Mathematics and Statistics, 
York University, Canada. Email: kuznetsov@mathstat.yorku.ca}}\;, 
and 
{Fenghao Yang
\footnote{Department of Mathematics and Statistics, 
York University, Canada. Email: fenghao@mathstat.yorku.ca}} 
}
\begin{document}
\maketitle

\begin{abstract} Exponential functionals of Brownian motion have been extensively studied in financial and insurance mathematics due to their broad applications, for example, in the pricing of Asian options. The Black-Scholes model is appealing because of mathematical tractability, yet empirical evidence shows that geometric Brownian motion does not adequately capture features of market equity returns. One popular alternative for modeling equity returns consists in replacing the geometric Brownian motion by an exponential of a L\'evy process. In this paper we use this latter model to study variable annuity guaranteed benefits and to compute explicitly the distribution of certain exponential functionals.  
\end{abstract}
{\vskip 0.15cm}
 \noindent {\it Keywords}: exponential functionals, L\'evy processes, Ornstein-Uhlenbeck process, Mellin transform, Barnes G-function, variable annuity guaranteed benefits
{\vskip 0.15cm}
 \noindent {\it 2010 Mathematics Subject Classification }: Primary: 60G51, Secondary: 91B30.
\section{Introduction}

The study of exponential functionals of Brownian motion has been popularized in finance literature by applications to the pricing of Asian options in financial markets. Asian option is a special type of exotic option contracts whose payoff is contingent upon the average price of underlying asset/equity/commodity over the contract period. In the Black-Scholes model, the evolution of equity value is modeled by a geometric Brownian motion, $\{S_t=S_0 e^{X_t}, t \ge 0\}$ where $X$ is a Brownian motion with drift and $S_0$ is the initial equity value. The continuously monitored Asian call option with a fixed strike price pays off the amount by which the arithmetic average of equity values (from the inception to maturity $T$) exceeds the strike price $K$. In other words, the payoff is
\[\left( \int^T_0 S_t\, \d t -K\right)^+=\left(S_0 J_T -K\right)^+,\] where we have denoted $(x)^+=\max(x,0)$ and 
\begin{equation}\label{def_Jt}
J_t:=\int^t_0 e^{X_s} \, \d s,
\end{equation}
is {\it the exponential functional of the process $X$.}
 Since the no-arbitrage price of an Asian option in the Black-Scholes model is determined by the expected present value of its payoff under a risk-neutral probability measure, the key to the computation of Asian option price is the distribution of the exponential functional $J_T$. There has been a vast amount of work in the literature devoted to the distribution of $J_T$. 
 To name a few, Yor \cite{Yor92} employs the Lamperti transformation relating the geometric Brownian motion and the exponential functional to a Bessel process.  Linetsky \cite{Lin04b} starts with an identity in distribution 
\[J_t \stackrel{d}{=} U_t:= e^{X_t} \int^t_0 e^{-X_s} \, \d s,\] and the fact that the latter is a diffusion process and then applies the eigenfunction expansion technique to determine the distribution of $U_t$.  Vecer \cite{Vec} applies the change of measure to produce a partial differential equation satisfied by the Asian option price. The above list is by no means comprehensive. More applications of exponential functionals of Brownian motion and references can be found in Carmona et al. \cite{CarPetYor} and  Matsumoto and Yor \cite{MatYor1,MatYor2}. 

In several empirical studies (see Cont \cite{Con}, Madan and Seneta \cite{MadSen}, Carr et al. \cite{CarGem}, Kou \cite{Kou})
it was demonstrated that the geometric Brownian motion does not adequately explain many stylized facts of empirical equity returns, such as asymmetric leptokurtic log-returns and volatility smile. One popular solution to this problem is to use L\'evy processes to model log-returns. When working with exponential functionals of L\'evy processes, it is easier to study the distribution of the exponential functional of the form 
\begin{equation}\label{def_I_q}
I_q:=J_{\ee(q)}=\int^{\ee(q)}_0 e^{X_s} \,\d s,
\end{equation}
where $\ee(q)$ is an exponential random variable with mean $1/q$, independent of the process $X$. The first explicit results related to the exponential functional $I_q$ were obtained by  Cai and Kou \cite{Cai_Kou} for hyperexponential L\'evy processes. These results were later extended to processes with jumps of rational transform in \cite{Kuz2012} and to meromorphic L\'evy process in  \cite{KuzHac}. By now the analytical theory behind the exponental functionals $I_q$ is rather well understood, see the papers by Patie and Savov \cite{Patie_Savov,PS2016}.

In this paper we are interested in studying the distribution of a more general exponential functional of the form
\begin{equation}\label{def_J_xt}
J_{x,t}:=x e^{X_t}+\int^t_0 e^{X_s} \,\d s, \;\;\; x\ge 0,
\end{equation}
and of its ``exponential maturity" counterpart 
\begin{equation}\label{def_I_xq}
I_{x,q}:=J_{x,\ee(q)}. 
\end{equation}
Next we will explain the motivation for studying these objects: it comes from certain embedded options in equity-linked insurance, known as {\it variable annuity guaranteed benefits}. 

Equity-linked insurance products allow policyholders to invest their premiums in equity market. In other words, the daily returns on the premium investments are directly linked to a particular equity index, such as S\&P 500, or a particular equity fund of the policyholder's choosing. Upon selection, the premiums are transferred by the insurer to third-party fund managers. To illustrate the mathematical structure, we consider a simplified example. Let $\{F_t, t \ge 0\}$ denote the evolution of a policyholder's investment account and $\{S_t, t \ge 0\}$ denote that of an equity index. Then the equity-linking mechanism dictates that
\begin{align}  \label{equilink}
F_t = F_0 \frac{S_t }{S_0} e^{-mt}, \qquad t \ge 0,
\end{align}
 where $m$ is the rate of account-value-based management and expenses (M\&E) fee per time unit. Among various products, variable annuities are of particular interest as they offer investors a selection of investments often with added guarantees which protect policyholders from severe losses on their investments. These added benefits can often be viewed as the insurance industry's counterparts of option contracts in financial markets. For example, a guaranteed minimum death benefit (GMDB) would guarantee that a policyholder's beneficiary receives the greater of the then-current account value and a guaranteed minimum amount upon the policyholder's death. For example, the guarantee, denoted by $\{G_t, t\ge 0\}$, is for the policyholder to recoup at least his/her initial investment with interest accrued at the risk-free rate, i.e. $G_t=F_0 e^{rt}$, where $r$ is the yield rate per time unit on the insurer's assets backing up the GMDB liability. Denote by $T_x$ the future lifetime of the policyholder, who is currently at age $x$. It is typically assumed in practice that the mortality model is independent of equity returns, i.e. $T_x$ is independent of $\{S_t, t \ge 0\}.$ Therefore, the payoff from the GMDB is given by
\[(G_{T_x}-F_{T_x})^+ ,\] 
which resembles a put option in financial markets. Keep in mind, however, that without any guaranteed benefits the insurer would simply transfer the premiums to third party fund managers. Like other guaranteed benefits, the GMDB is technically an add-on provision to the base contract that provides additional benefits to the policyholder at an additional cost and from which the insurer assumes additional liability. Hence the GMDB is often referred to as a rider. Nonetheless, due to nonforfeiture regulations, the GMDB rider is typically offered on all variable annuity contracts.

While there are many common features of financial derivatives and embedded options in insurance products, a key difference is that financial derivatives are typically short-dated and insurance coverages last for decades. Due to the lack of long-dated options in the market, the risk management of equity-linked insurance is much more sophisticated than the trading of derivatives and plays  a fundamental role to the success of insurance business. In this work, we consider a simplified model that captures the structure of the risk management problem for a variable annuity contract with a plain-vanilla GMDB.

Unlike many exchange-traded financial derivatives which require only an up-front fee, embedded options in equity-linked insurance products are often compensated by a stream of fee incomes. For example, fund managers typically charge a fixed percentage $m$ per time unit per dollar of each policyholder's account and a portion of the fees, say $m_d$, is kicked back to the insurer to compensate for the GMDB rider. Here we consider the present value of the fee income collected continuously up until the time of the policyholder's death,
\[\int^{T \wedge T_x}_0 e^{-rs} m_d F_s\, \d s,\] where $r$ is the yield rate on insurer's bonds backing up the GMDB liability. As in most cases fee incomes exceed the GMDB liability, insurers are interested in the present value of insurer's net liability (gross liability less fee income) 
\[L:=e^{-rT_x}(G_{T_x}-F_{T_x})_+ -\int^{T_x}_0 e^{-rs} m_d F_s \,\d s.\] 
A crucial task of risk management modeling is to quantify and assess the likelihood and severity of positive net liability, which leads to a loss to the insurer. Practitioners typically apply certain risk measures to empirical distributions of net liabilities developed from Monte Carlo simulations. The risk measures would then be used to form the basis of risk management decision making, such as setting up reserves and capitals, to provide a buffer against losses under adverse economic conditions.
The most commonly used risk measures in the North American insurance industry is the conditional tail expectation, 
\[\mathrm{CTE}_p(L)=\e[L|L>\mathrm{VaR}_p(L)],\] where the Value-at-Risk is determined by
\[\mathrm{VaR}_p(L):=\inf\{y: \p[L \le y]\geq p\}. \label{VaR}\]
Since the purpose of risk management is to analyze the severity of positive loss rather than negative loss (profit), we are interested in the risk measures CTE$_p$ and VaR$_p$ for $p>\xi:=\p(L\le 0)$.
 In order to compute the above-mentioned risk measures, we need to compute for $V>\mathrm{VaR}_\xi$ ,
\[ \p(L > V|T_x=t)= \p\left( e^{-rt} F_t+\int^t_0 e^{-rs} m_d F_s \,\d s < F_0- V\right).\] 
It is clear that this rather unique funding mechanism in equity-linked insurance gives rise to a generalized form of exponential functional as defined in \eqref{def_J_xt}. 

While any concern regarding fitting empirical data in the modeling of financial derivatives may carry over to that of equity-linked insurance, there is the additional question of the validity of such models for long-term projection. Nonetheless, the insurance industry has in the past two decades adopted many well-known equity return models from the financial industry, such as geometric Brownian motion, regime-switching geometric Brownian motion, etc. See American Academy of Actuaries publications \cite{AAA}, \cite{AAA09} and \cite{AAApack} for details on a selection of equity return models. Computations of risk measures for variable annuity guaranteed benefits based on exponential functionals of Brownian motion can be found in Feng and Volkmer \cite{FenVol, FenVol2}.  In this paper, we are interested in the exponential L\'evy processes, primarily for two reasons: (i) such models have been shown to explain various stylized facts of empirical data and (ii) they often lead to analytical solutions, not only for pricing problems of exotic options, which are well-studied in finance literature, but also for risk measures of extreme liabilities in equity-linked insurance products, thereby providing fast algorithms for computation needed for capital requirement and other risk management purposes.

This rest of the paper is organized as follows. In section \ref{section_main_results} we study exponential functionals $I_{x,q}$ for general L\'evy processes and derive an integral representation of the Mellin transform of $I_{x,q}$. In section \ref{section_Kou_process} we consider the case of Kou process, and compute the Mellin transform of $I_{x,q}$ explicitly in terms of Meijer G-function, and then identify the density of $I_{x,q}$ (it is also given explicitly in terms of Meijer G-function 
and hypergeometric functions). In section \ref{section_applications} we apply these results to the problem of computing various risk measures for the GMDB and compare the efficiency and accuracy of our semi-analytical approach with the Monte Carlo method.

\section{Main results}\label{section_main_results}

First we introduce the necessary notation and definitions. We consider a L\'evy process $X$, started from zero, and having 
the Laplace exponent $\psi(z):=\ln \e[\exp(z X_1)]$, $z\in \i \r$. The L\'evy-Khintchine formula
tells us that
$$
\psi(z)=\sigma^2 z^2/2+\mu z+\int_{\r} \left(e^{zx}-1-zx {\bf 1}_{\{|x|<1\}} \right) \Pi(\d x), \;\;\; 
z\in \i \r, 
$$
where $\sigma \ge 0$, $\mu \in \r$ and the L\'evy measure $\Pi(\d x)$ satisfies 
$\int_{\r} 1 \wedge x^2 \Pi(\d x)<\infty$. 
We denote by $\ee(q)$ the exponential random variable with mean $1/q$, which is independent of $X$, and we recall our definition of the exponential functional
$$
I_{x,q}:=x e^{X_{\ee(q)}}+\int_0^{\ee(q)} e^{X_s} \d s, \;\;\; x\ge 0. 
$$ 

\begin{remark}
Using time-reversal it is easy to show that $I_{x,q}\stackrel{d}{=} U_{\ee(q)}$, where $U_t$ is the generalized 
Ornstein-Uhlenbeck process 
\begin{equation}\label{OU_process}
U_t=xe^{X_t}+e^{X_t}\int_0^{t}e^{-X_s} \d s. 
\end{equation}
Note that $U_t$ is a strong Markov process started from $x$ with the generator 
$${\mathcal L}^{(U)}f(x)={\mathcal L}^{(X)} \phi(\ln(x))+f'(x),$$ where $\phi(x):=f(e^x)$
and ${\mathcal L}^{(X)}$ is the Markov generator of the L\'evy process $X$. This results follows from \cite[Proposition 2.3]{KPS_2012}. 
\end{remark}

We define the Mellin transform of $I_{x,q}$
\begin{equation}
\label{def_M_xq}
\mm_{x,q}(s)=\e\left[ \left(I_{x,q}\right)^{s-1}\right]. 
\end{equation}
Initially $\mm_{x,q}(s)$ is well defined on the vertical line $\re(s)=1$, later we will extend this function analytically into a certain vertical strip.

Everywhere in this section we will work under the following condition: the measure $\Pi(\d x)$ has exponentially decaying tails. In other words
\begin{equation}\label{tail_condition}
\int_{\r \setminus (-1,1)} e^{\theta |x|} \Pi(\d x)<\infty, \;\;\; 
{\textnormal{ for some }} \theta>0. 
\end{equation}
The above condition implies that the Laplace exponent $\psi(z)$ is analytic in the strip $|\re(z)|<\theta$ and it is convex on the real interval $z \in (-\theta,\theta)$. 

\begin{definition}
For $q>0$ we define 
\begin{align*}
\Phi^+(q)=\sup\{ z>0 \, : \, \psi(z)<q\} \;\;\; 
{\textnormal{ and }} \;\;\; \Phi^-(q)=\inf\{ z<0 \, : \, \psi(z)<q\}. 
\end{align*}
\end{definition}
Note that condition \eqref{tail_condition} implies that for every $q>0$ we have $\Phi^+(q)>0$ and $\Phi^-(q)<0$.

\begin{proposition}\label{prop_finite_Mellin_transform}
 For all $q>0$, $x\ge 0$ and $s\in (0,1+\Phi^+(q))$ we have ${\mathcal M}_{x,q}(s)<\infty$.
 \end{proposition}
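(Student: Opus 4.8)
The plan is to write $\alpha:=s-1\in(-1,\Phi^+(q))$ and treat the two regimes $\alpha\ge 0$ and $\alpha\in(-1,0)$ separately, since they are of genuinely different character: the first concerns the upper tail of $I_{x,q}$, the second its behaviour near the origin.

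\emph{Case $\alpha\in[0,\Phi^+(q))$.} For $\alpha=0$ the statement is trivial, so take $\alpha\in(0,\Phi^+(q))$. I would use a block decomposition of $I_{x,q}$ over the time intervals $[kt,(k+1)t)$, $k\ge 0$, for a fixed $t>0$: a routine application of the lack-of-memory property of $\ee(q)$ and of the stationary independent increments of $X$ gives the identity in law $I_{x,q}\stackrel{d}{=}\sum_{k\ge 0}C_kA_k$, where $C_k:=\mathbb{I}_{\{\ee(q)>kt\}}e^{X_{kt}}$, the $A_k$ are independent copies of $A_0:=\int_0^{t\wedge\ee(q)}e^{X_s}\,\d s+\mathbb{I}_{\{\ee(q)\le t\}}xe^{X_{\ee(q)}}$, and $A_k$ is independent of $(C_0,\dots,C_k)$. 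Since $\psi$ is convex with $\psi(0)=0$ and $\alpha<\Phi^+(q)$ we have $\psi(\alpha)<q$ (in particular $\psi(\alpha)$ is finite), whence $\e[C_k^\alpha]=e^{-qkt}\e[e^{\alpha X_{kt}}]=\rho^k$ with $\rho:=e^{(\psi(\alpha)-q)t}<1$. Also $A_0\le(x+t)e^{\overline X_t}$ with $\overline X_t:=\sup_{s\le t}X_s$, and the finiteness of $\e[e^{\alpha X_t}]$ together with a standard $L^p$ maximal inequality for the exponential martingale $e^{\alpha X_s-\psi(\alpha)s}$ gives $\e[e^{\alpha\overline X_t}]<\infty$, hence $\e[A_0^\alpha]<\infty$. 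Combining these with the inequality $(\sum y_k)^\alpha\le\sum y_k^\alpha$ (if $\alpha\le 1$) or Minkowski's inequality in $L^\alpha$ (if $\alpha>1$) and the stated independence yields $\e[(I_{x,q})^\alpha]\le C\,\e[A_0^\alpha]\sum_{k\ge 0}\rho^{k\min(1,1/\alpha)}<\infty$.

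\emph{Case $\alpha\in(-1,0)$.} Put $u:=-\alpha\in(0,1)$; I must show $\e[(I_{x,q})^{-u}]<\infty$. By the layer-cake formula this equals $u\int_0^\infty\mu^{-u-1}\p(I_{x,q}\le\mu)\,\d\mu$, and since the integrand is dominated by $\mu^{-u-1}$ near $+\infty$, it suffices to control $\p(I_{x,q}\le\mu)$ as $\mu\downarrow 0$. Fix $\beta\in(0,1-u)$; for small $\mu$ set $a=a(\mu):=\beta\ln(1/\mu)$ and $t_0=t_0(\mu):=\mu e^{a}=\mu^{1-\beta}$. From $I_{x,q}\ge\int_0^{t_0\wedge\ee(q)}e^{X_s}\,\d s\ge(t_0\wedge\ee(q))\exp(\inf_{s\le t_0}X_s)$ one gets $\{I_{x,q}\le\mu\}=\{I_{x,q}\le t_0e^{-a}\}\subseteq\{\ee(q)<t_0\}\cup\{\inf_{s\le t_0}X_s\le-a\}$. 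The first event has probability at most $qt_0=q\mu^{1-\beta}$. For the second, split $X=Y+Z$, where $Z$ is the compound Poisson process of jumps of modulus exceeding $1$ (intensity $\lambda:=\Pi(\{|x|>1\})<\infty$) and $Y$ carries the bounded jumps, the Gaussian part and the drift, so that $Y$ has finite exponential moments of every order. On $\{\inf_{s\le t_0}X_s\le-a\}$ either $Z$ has a jump of size $\le-a/2$ before $t_0$ (probability $\le t_0\,\Pi((-\infty,-a/2])\le Ct_0e^{-\theta a/2}$, using \eqref{tail_condition} for any admissible $\theta$), or $Z$ has at least two jumps before $t_0$ (probability $\le(\lambda t_0)^2$), or $\inf_{s\le t_0}Y_s\le-a/2$, whose probability, by the exponential Chebyshev inequality together with Doob's maximal inequality applied to $e^{vY_s-\psi_Y(v)s}$, is $\le Ke^{-va/2+\Psi(v)t_0}$ for every $v>0$ and a suitable finite $\Psi(v)$; taking $v$ a large fixed constant bounds this by $K'\mu^{v\beta/2}$. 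Collecting the estimates, $\p(I_{x,q}\le\mu)\le C'\mu^{1-\beta}$ for all small $\mu$ (choose $v$ so large that $v\beta/2\ge 1-\beta$; the remaining exponents already exceed $1-\beta$, whatever the size of $\theta$). Since $\beta<1-u$, the integral $\int_0\mu^{-u-1}\cdot\mu^{1-\beta}\,\d\mu$ converges, which finishes this case.

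\emph{Main obstacle.} The delicate part is the negative-moment case, and in particular the fact that the strip reaches all the way down to $s=0$, i.e.\ that negative moments of \emph{every} order $u<1$ are finite. Naive bounds are not enough: $I_{x,q}\ge xe^{X_{\ee(q)}}$ (for $x>0$) only yields finiteness for $u<-\Phi^-(q)$, which may be strictly below $1$, and the bound $\int_0^{t_0}e^{X_s}\,\d s\ge t_0\exp(\inf_{s\le t_0}X_s)$ with a \emph{fixed} $t_0$ is worthless when the left tail of $\Pi$ is heavy, since then $\exp(-u\inf_{s\le t_0}X_s)$ need not be integrable. What makes it work is letting the depth $a$ and the window length $t_0$ depend on $\mu$ in a coupled way, so that the event $\{I_{x,q}\le\mu\}$ is dominated simply by $\{\ee(q)<t_0\}$, of probability $\asymp q\mu^{1-\beta}$; the probability that $X$ moves a macroscopic distance of order $a$ during the vanishingly short window $[0,t_0]$ is then negligible to every polynomial order, thanks to the factor $t_0$ in front of $\Pi((-\infty,-a/2])$ and to the fact that a bounded-jump Lévy process essentially cannot travel a fixed distance in time $t_0\ll 1$. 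The positive-moment case, by contrast, is routine once the block decomposition is written down.
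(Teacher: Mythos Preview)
Your proof is correct, but it follows a markedly different route from the paper's. The paper's argument is very short: one writes $I_{x,q}=\xi+\eta$ with $\xi=xe^{X_{\ee(q)}}$ and $\eta=I_{0,q}$, notes that $\e[\xi^{w}]=q/(q-\psi(w))<\infty$ for $w\in(\Phi^-(q),\Phi^+(q))$, cites Rivero \cite[Lemma~2]{Rivero2005} for the fact that $\e[\eta^{w}]<\infty$ for all $w\in(-1,\Phi^+(q))$, and then combines the two via elementary inequalities --- subadditivity of $x\mapsto x^{w}$ for $0<w<1$, Minkowski for $w\ge 1$, and the monotone bound $(\xi+\eta)^{w}\le\eta^{w}$ for $-1<w<0$. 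In particular the negative-moment case, which you identify as the main obstacle, is handled in one line by discarding $\xi$ and invoking the known result for $I_{0,q}$.

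What your approach buys is self-containment and extra information: you do not appeal to \cite{Rivero2005} but instead establish directly the small-ball estimate $\p(I_{x,q}\le\mu)=O(\mu^{1-\beta})$ for every $\beta>0$, via a scale-dependent window $t_0(\mu)=\mu^{1-\beta}$ and a big-jump/small-jump splitting of $X$; and for positive moments you set up a block decomposition over $[kt,(k+1)t)$ rather than quoting existing moment bounds. The cost is a considerably longer argument. Two minor points of phrasing are worth tightening: the claim that the $A_k$ are ``independent copies of $A_0$'' is not literally true --- what you need, and what your computation actually uses, is that conditionally on $\{\ee(q)>kt\}$ and on $\mathcal F_{kt}$ the $k$-th block has the law of $A_0$ and is independent of $C_0,\dots,C_k$; and in the Doob step controlling $\e[e^{\alpha\overline X_t}]$ you implicitly need $\psi$ to be finite at some $\alpha'>\alpha$, which does hold because $\alpha<\Phi^+(q)$ and $\Phi^+(q)$ cannot exceed the right endpoint of the interval on which $\psi$ is finite.
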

 \begin{proof}
 Let us denote $\xi=x \exp(X_{\ee(q)})$ and $\eta=I_{0,q}$, so that $I_{x,q}=\xi+\eta$. Note that 
 $$\e[\xi^{w}]=q/(q-\psi(w))<\infty, \;\;\; w \in (\Phi^-(q), \Phi^+(q))$$ 
 and 
 $\e[\eta^w]<\infty$ for all $w\in (-1,\Phi^+(q))$ (see Rivero \cite[Lemma 2]{Rivero2005}).

 When $0<w<\min(\Phi^+(q),1)$ we use Jensen's inequality and obtain
 $$
 \e[(\xi+\eta)^w]\le \e[\xi^w]+\e[\eta^w]<\infty. 
 $$
 If $\Phi^+(q)>1$, then for $1\le w<\Phi^+(q)$ we use Minkowski inequality
  $$
 \e[(\xi+\eta)^w]^{1/w}\le \e[\xi^w]^{1/w}+\e[\eta^w]^{1/w}<\infty. 
 $$
 Finally, when $-1<w<0$ we use the fact that the function $x \in (0,\infty) \mapsto x^{w}$ is decreasing and obtain
 $$
 \e[(\xi+\eta)^w]<\e[\eta^w]<\infty. 
 $$  
 Thus we have proved that  $\e[(I_{x,q})^w]=\e[(\xi+\eta)^w]<\infty$ for all $w\in (-1,\Phi^+(q)$, which is equivalent to the statement of 
 Proposition \ref{prop_finite_Mellin_transform}.  
 \end{proof}

%\begin{proposition}\label{prop_functional_equation}
%For $0<s<\Phi^+(q)$ the following functional identity is satisfied
%\begin{equation}\label{M_functional_equation}
%(q-\psi(s)){\mathcal M}_{x,q}(s+1)=qx^s+s{\mathcal M}_{x,q}(s). 
%\end{equation}
%\end{proposition}
%\begin{proof}
%{\color{red} Runhuan, do you know the reference for this result? }
%\end{proof}

The following theorem is our main result in this section. 
\begin{theorem}\label{thm_main}
For $q>0$ and $w \in ( \max(-1, \Phi^-(q)) ,0)$
\begin{equation}\label{formula_M_xq_general}
{\mathcal M}_{x,q}(1+w)=q \sin(\pi w) {\mathcal M}_{0,q}(1+w)
\times \left[-\frac{1}{2\i} \int_{c+\i \r} \frac{1}{ z \sin(\pi z) {\mathcal M}_{0,q}(-z)} \times \frac{x^{-z} \d z}{\sin(\pi (w+z))}\right],
\end{equation}
where $c \in (0,-w)$. 
\end{theorem}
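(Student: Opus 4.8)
The plan is to compute the Mellin transform of the function $x\mapsto \mm_{x,q}(1+w)$ and then invert it. Write $T=\ee(q)$. By the Remark (time-reversal) we have $I_{x,q}\stackrel{d}{=}e^{X_T}(x+\hat I)$ with $\hat I:=\int_0^T e^{-X_s}\,\d s$, so that $h(x):=\mm_{x,q}(1+w)=\e[e^{wX_T}(x+\hat I)^w]$. Since $w<0$ we have $0<(x+\hat I)^w\le x^w$, hence $h(x)\le q/(q-\psi(w))\,x^w$ for large $x$ and $h(x)\to\e[(e^{X_T}\hat I)^w]=\e[I_{0,q}^w]=\mm_{0,q}(1+w)$ as $x\downarrow 0$; therefore $\hat h(z):=\int_0^\infty x^{z-1}h(x)\,\d x$ is analytic on the strip $0<\re(z)<-w$. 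Applying Fubini (the relevant moment $\e[e^{wX_T}\hat I^{\,w+\re(z)}]$ is finite by the argument of Proposition~\ref{prop_finite_Mellin_transform}) together with the Beta integral $\int_0^\infty t^{z-1}(1+t)^w\,\d t=\Gamma(z)\Gamma(-w-z)/\Gamma(-w)$ gives
\[
\hat h(z)=\frac{\Gamma(z)\Gamma(-w-z)}{\Gamma(-w)}\,\e\!\left[e^{wX_T}\hat I^{\,w+z}\right].
\]

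The main step is to evaluate $\varphi(s):=\e[e^{wX_T}\hat I^{\,s}]$. Splitting $\hat I$ over $[0,\varepsilon]$ and $[\varepsilon,T]$ on $\{T>\varepsilon\}$, using the Markov property and stationarity of increments, expanding to first order in $\varepsilon$ and letting $\varepsilon\downarrow 0$ (the same mechanism underlying Proposition~\ref{prop_finite_Mellin_transform}; alternatively, read off from the explicit moments $\varphi(n)=n!\,q\prod_{k=0}^n(q-\psi(w-k))^{-1}$), one obtains the functional equation $(q-\psi(w-s))\varphi(s)=s\,\varphi(s-1)$. On the other hand, using the classical recursion $\mm_{0,q}(\sigma+1)=\sigma\,\mm_{0,q}(\sigma)/(q-\psi(\sigma))$ one checks that $G(s):=\Gamma(s+1)\Gamma(w-s)/\mm_{0,q}(w-s)$ satisfies the \emph{same} recursion $G(s)/G(s-1)=s/(q-\psi(w-s))$. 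Hence $\varphi/G$ is $1$-periodic; together with a growth estimate for $\mm_{0,q}$ along vertical lines (and the cancellation of poles) this forces $\varphi/G$ to be a constant $c$. Evaluating at $s=w$, where $\varphi(w)=\e[(e^{X_T}\hat I)^w]=\mm_{0,q}(1+w)$ and $G(w)=\Gamma(1+w)\lim_{\sigma\to 0}\Gamma(\sigma)/\mm_{0,q}(\sigma)=\Gamma(1+w)/q$ (using $\mm_{0,q}(\sigma)\sim q/\sigma$ near $\sigma=0$), gives $c=q\,\mm_{0,q}(1+w)/\Gamma(1+w)$, so that
\[
\e\!\left[e^{wX_T}\hat I^{\,w+z}\right]=\varphi(w+z)=\frac{q\,\mm_{0,q}(1+w)}{\Gamma(1+w)}\cdot\frac{\Gamma(1+w+z)\,\Gamma(-z)}{\mm_{0,q}(-z)}.
\]

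Substituting this into the displayed formula for $\hat h$ and collapsing the Gamma factors via the reflection formula (namely $\Gamma(z)\Gamma(-z)=-\pi/(z\sin(\pi z))$, $\Gamma(-w-z)\Gamma(1+w+z)=-\pi/\sin(\pi(w+z))$, and $\Gamma(-w)\Gamma(1+w)=-\pi/\sin(\pi w)$) yields
\[
\hat h(z)=\frac{-\pi\,q\,\sin(\pi w)\,\mm_{0,q}(1+w)}{z\,\sin(\pi z)\,\sin(\pi(w+z))\,\mm_{0,q}(-z)}.
\]
Since this decays exponentially on the line $\re(z)=c$ (e.g.\ from the Gamma factors) and $h$ is continuous, the Mellin inversion theorem gives, for any $c\in(0,-w)$, that $h(x)=(2\pi\i)^{-1}\int_{c+\i\r}\hat h(z)x^{-z}\,\d z$; pulling $q\sin(\pi w)\mm_{0,q}(1+w)$ out front and noting $-\pi/(2\pi\i)=-1/(2\i)$ produces exactly \eqref{formula_M_xq_general}.

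The step I expect to be the main obstacle is showing that the $1$-periodic function $\varphi/G$ is constant: this needs the precise decay of $\mm_{0,q}$ (and boundedness of $\varphi$) along vertical lines, plus a verification that the poles of $\varphi$ and of $G$ cancel so that the quotient is entire — the usual technical heart of such Mellin computations. Secondary points requiring care are the rigorous justification of the $\varepsilon\downarrow 0$ limit producing the functional equation for $\varphi$, and the applicability of the Mellin inversion formula (integrability of $\hat h$ on the contour).
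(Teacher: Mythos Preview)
Your proposal is correct and follows essentially the same route as the paper. The only cosmetic difference is that the paper makes the Esscher change of measure explicit: with $\frac{\d\q}{\d\p}\big|_{{\mathcal F}_t}=e^{wX_t-t\psi(w)}$ and $\tilde q=q-\psi(w)$ one gets $\varphi(s)=\frac{q}{\tilde q}\,\e_\q[(\hat J_{\ee(\tilde q)})^s]$, so your $\varphi$ is (a constant times) the Mellin transform of a \emph{standard} exponential functional of the L\'evy process $-X$ under $\q$, and the functional equation is read directly from Maulik--Zwart rather than derived via an $\varepsilon$-expansion. The uniqueness step you flag as the obstacle (showing that the $1$-periodic quotient $\varphi/G$ is constant, via growth of $\mm_{0,q}$ on vertical lines and cancellation of poles) is exactly what the paper outsources to Proposition~2 of \cite{Kuz2012} in its Lemma~\ref{lemma_1_over_M}; the three conditions checked there (analyticity/zero-freeness, recursion, and the $o(e^{2\pi|\im s|})$ bound on $1/f$) are precisely the ingredients of your periodicity argument. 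Your evaluation of the constant at $s=w$ using $\mm_{0,q}(\sigma)/\Gamma(\sigma)\to q$ as $\sigma\to 0$ is the same identity the paper obtains from the functional equation $F(s+1)=F(s)/(q-\psi(s))$ at $s=0$.
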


Before we prove Theorem \ref{thm_main}, we need to establish several auxiliary results.
\begin{lemma}\label{lemma1}
For $q>0$ the function $F(s)={\mathcal M}_{0,q}(s)/\Gamma(s)$ is analytic and zero-free in the vertical strip 
$\Phi^-(q)<\re(s)<1+\Phi^+(q)$ and it satisfies
\begin{equation}\label{functional_equation_F}
F(s+1)=\frac{1}{q-\psi(s)} F(s), \;\;\; \Phi^-(q)<\re(s)<\Phi^+(q). 
\end{equation} 
\end{lemma}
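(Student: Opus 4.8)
The strategy is to derive the functional equation first — this is the conceptual heart — and then bootstrap it to get analyticity and the zero-free property on the whole strip, starting from a narrower strip where these are automatic.

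First I would establish the functional equation \eqref{functional_equation_F} for $\re(s)$ in the strip where everything is easy, namely $0<\re(s)<\Phi^+(q)$. The key identity is the recursion satisfied by the Mellin transform of $I_{0,q}=\int_0^{\ee(q)}e^{X_s}\,\d s$. One standard way: condition on an infinitesimal initial increment, or equivalently use the well-known result (due to Carmona–Petit–Yor, or Bertoin–Yor) that the positive moments of $I_{0,q}$ satisfy $\e[I_{0,q}^{\,n}] = \tfrac{n}{q-\psi(n)}\,\e[I_{0,q}^{\,n-1}]$, which extends by analyticity to the Mellin variable: $\mm_{0,q}(s+1) = \tfrac{s}{q-\psi(s)}\mm_{0,q}(s)$. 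Dividing by $\Gamma(s+1)=s\Gamma(s)$ gives exactly \eqref{functional_equation_F}. The self-similarity / factorization identity $I_{0,q}\stackrel{d}{=} \ee(q)\cdot(\text{something})$ or the integral equation for the density can be used as an alternative derivation; I would cite Rivero \cite{Rivero2005} or the classical moment recursion and verify it holds on $0<\re(s)<\Phi^+(q)$, where both sides are finite and analytic by Proposition \ref{prop_finite_Mellin_transform} and the fact that $q-\psi(s)\ne 0$ there (since $\psi$ is convex on $(-\theta,\theta)$ and $\psi(\Phi^+(q))\le q$, actually $\psi(s)<q$ strictly for $0<s<\Phi^+(q)$).

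Next I would handle analyticity and the zero-free property. On the strip $0<\re(s)<1+\Phi^+(q)$, $\mm_{0,q}(s)$ is finite (Proposition \ref{prop_finite_Mellin_transform}) and analytic — finiteness of the Mellin transform on an open vertical strip gives analyticity there by a standard argument (Morera/Fubini, differentiating under the expectation). Since $\Gamma(s)$ is analytic and zero-free for $\re(s)>0$, $F(s)=\mm_{0,q}(s)/\Gamma(s)$ is analytic on $0<\re(s)<1+\Phi^+(q)$; and it is zero-free there because $\mm_{0,q}(s)=\e[I_{0,q}^{\,s-1}]$ with $I_{0,q}>0$ a.s.\ is strictly positive for real $s$, and more generally nonzero — but actually I only need $F$ nonzero, which I get from the functional equation as follows. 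To extend leftward into $\Phi^-(q)<\re(s)\le 0$, I use \eqref{functional_equation_F} rewritten as $F(s) = (q-\psi(s))F(s+1)$: the right-hand side is analytic wherever $F(s+1)$ is analytic (i.e.\ $\Phi^-(q)-1 < \re(s) < \Phi^+(q)$, overlapping the known region) and $\psi$ is analytic (i.e.\ $|\re(s)|<\theta$; note $\Phi^-(q)>-\theta$ by \eqref{tail_condition}). This defines the analytic continuation of $F$ to all of $\Phi^-(q)<\re(s)<1+\Phi^+(q)$. For the zero-free property on the extended region: if $F(s_0)=0$ for some $s_0$ with $\Phi^-(q)<\re(s_0)\le 0$, then from $F(s_0)=(q-\psi(s_0))F(s_0+1)$ we would need either $F(s_0+1)=0$ (impossible by downstream zero-freeness, iterating until $\re>0$) or $q-\psi(s_0)=0$; but $\psi$ takes the value $q$ only at $\Phi^-(q)$ and $\Phi^+(q)$ on the real line, and on the open strip $\re(z)\in(\Phi^-(q),\Phi^+(q))$ one has $\re(\psi(z)) < \psi(\re(z)) \le$ something $< q$ — more carefully, $|\e[e^{zX_1}]| \le \e[e^{\re(z)X_1}] = e^{\psi(\re(z))}$, so $\re(\psi(z)) \le \psi(\re(z)) < q$ strictly in the open strip, hence $q-\psi(z)\ne 0$ there. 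This rules out zeros.

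The main obstacle I anticipate is the rigorous justification of the functional equation \eqref{functional_equation_F} at the level of the Mellin transform (complex $s$), as opposed to integer moments. The clean route is: establish the moment recursion for $n\in\mathbb{N}$, observe both $\mm_{0,q}(s+1)$ and $\tfrac{s}{q-\psi(s)}\mm_{0,q}(s)$ are analytic on $0<\re(s)<\Phi^+(q)$, and invoke the identity theorem since they agree at the integers in that strip (which requires $\Phi^+(q)>1$; if $\Phi^+(q)\le 1$ one instead derives the recursion directly from a Markovian/integral-equation argument valid for all complex $s$ in the strip, e.g.\ via $I_{0,q}\stackrel{d}{=}\int_0^{\ee(q)}e^{X_s}\d s$ and conditioning, following Rivero or Maulik–Zwart). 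The rest — analyticity from finiteness, zero-freeness of $\Gamma$, non-vanishing of $q-\psi$ in the strip — is routine complex analysis once the functional equation is in hand.
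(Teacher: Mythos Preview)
Your approach to the functional equation and to analyticity is essentially a reconstruction from first principles of what the paper obtains by citation (Maulik--Zwart for the recursion $\mm_{0,q}(s+1)=\tfrac{s}{q-\psi(s)}\mm_{0,q}(s)$, plus the standard Morera/Fubini argument for analyticity of Mellin transforms, plus leftward continuation via $F(s)=(q-\psi(s))F(s+1)$). That part is fine, and your argument that $q-\psi(z)\ne 0$ on the open strip via $\re\psi(z)\le\psi(\re z)<q$ is clean and correct.

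The genuine gap is in the zero-free claim. Your bootstrap (``if $F(s_0)=0$ with $\re(s_0)\le 0$, iterate rightward via the functional equation until $\re>0$'') presupposes that $F$ is already known to be zero-free on the initial strip $0<\re(s)<1+\Phi^+(q)$. But you never establish that base case: you note that $\mm_{0,q}(s)>0$ for \emph{real} $s$, correctly acknowledge that this does not give nonvanishing for complex $s$, and then say you will get it ``from the functional equation as follows'' --- yet the functional equation only shifts a putative zero by integer steps within the strip; it never produces a contradiction unless you have a zero-free region to land in. Mellin transforms of positive random variables can certainly vanish off the real axis: on the line $\re(s)=1$, $\mm_{0,q}(1+it)$ is the characteristic function of $\ln I_{0,q}$, and characteristic functions may have zeros. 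The paper closes exactly this gap by invoking the generalized Weierstrass product representation of Patie--Savov \cite{Patie_Savov}, which exhibits $\mm_{0,q}(s)/\Gamma(s)$ as an explicit nonvanishing product; this is a genuinely deeper structural input than anything in your outline, and you will need either it or an equivalent.
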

\begin{proof}
The functional equation follows from  Maulik and Zwart \cite[Lemma 2.1]{Maulik2006}  (see also Carr et al. \cite[Proposition 3.1]{CarPetYor}). 
The fact that $F(s)$ is zero-free follows from the generalized Weierstrass product representation (see Patie and Savov \cite[Theorem 2.1]{Patie_Savov}). 
\end{proof}
 
Let us fix $q>0$, $w \in (\Phi^-(q),0)$ and define a new measure ${\mathbb Q}$  
\begin{equation}\label{def_measure_Q}
\frac{\d {\mathbb Q}}{\d {\mathbb P}} \Big \vert_{{\mathcal F}_t}=e^{w X_t - t \psi(w)}. 
\end{equation}
Under the new measure ${\mathbb Q}$, the process $X$ is a L\'evy process with the Laplace exponent 
$$\psi_{\q}(z)=\psi(z+w)-\psi(w).$$ 
Let us define the exponential functional
\begin{equation}\label{def_J_t}
\hat J_t=\int_0^{t} e^{-X_s} \d s. 
\end{equation}

\begin{lemma}\label{lemma_1_over_M}
 For $w \in (\Phi^-(q),0)$ we denote 
 $\tilde q:=q-\psi(w)$. Then for $0<\re(s)<1+w-\Phi^-(q)$
\begin{equation}\label{eqn_Mellin_J_q}
\e_{\q}\left[ (\hat J_{ e(\tilde q)})^{s-1} \right]=\frac{{\mathcal M}_{0,q}(w)}{\Gamma(w)} \times \frac{\Gamma(s)\Gamma(1+w-s)}{{\mathcal M}_{0,q}(1+w-s)}. 
\end{equation}
\end{lemma}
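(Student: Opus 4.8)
The plan is to compute the Mellin transform $\e_{\q}[(\hat J_{e(\tilde q)})^{s-1}]$ by recognizing $\hat J$ as the exponential functional of the \emph{dual} L\'evy process $-X$ under $\q$, so that Lemma \ref{lemma1} applies directly. First I would observe that under $\q$ the process $-X$ is a L\'evy process with Laplace exponent $z \mapsto \psi_\q(-z) = \psi(w-z) - \psi(w)$. Writing $\hat{\mathcal M}_{\tilde q}(s) := \e_{\q}[(\hat J_{e(\tilde q)})^{s-1}]$ and $\hat F(s) := \hat{\mathcal M}_{\tilde q}(s)/\Gamma(s)$, Lemma \ref{lemma1} (applied to the process $-X$ under $\q$ with killing rate $\tilde q$) tells us that $\hat F$ is analytic and zero-free on the appropriate vertical strip and satisfies the functional equation
\begin{equation*}
\hat F(s+1) = \frac{1}{\tilde q - \psi_\q(-s)}\,\hat F(s) = \frac{1}{q - \psi(w-s)}\,\hat F(s).
\end{equation*}
The strip of validity: Proposition \ref{prop_finite_Mellin_transform}-type reasoning (or Rivero \cite{Rivero2005}) gives finiteness of $\hat{\mathcal M}_{\tilde q}(s)$ for $0 < \re(s) < 1 + \Phi^-_\q(-\,\cdot\,)$, and unwinding the definition of $\Phi^\pm$ under $\q$ shows this is exactly $0 < \re(s) < 1 + w - \Phi^-(q)$, matching the claimed range.

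Next I would verify that the proposed right-hand side, call it $R(s) := \frac{{\mathcal M}_{0,q}(w)}{\Gamma(w)} \cdot \frac{\Gamma(s)\Gamma(1+w-s)}{{\mathcal M}_{0,q}(1+w-s)}$, satisfies the same functional equation and has the same normalization. For the normalization, set $s=1$: then $R(1) = \frac{{\mathcal M}_{0,q}(w)}{\Gamma(w)} \cdot \frac{\Gamma(1)\Gamma(w)}{{\mathcal M}_{0,q}(w)} = 1 = \hat{\mathcal M}_{\tilde q}(1)$, as required (both sides are the zeroth moment). For the functional equation, I would write $R(s)/\Gamma(s) = \frac{{\mathcal M}_{0,q}(w)}{\Gamma(w)} \cdot \frac{\Gamma(1+w-s)}{{\mathcal M}_{0,q}(1+w-s)}$ and compute the ratio at $s+1$ versus $s$. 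Using $\Gamma(w-s) = \Gamma(1+w-s)/(w-s)$ and Lemma \ref{lemma1} in the form ${\mathcal M}_{0,q}(1+w-s)/\Gamma(1+w-s) = F(1+w-s)$ together with $F(1+w-s) = \frac{1}{q-\psi(w-s)}F(w-s)$, one finds $\frac{R(s+1)/\Gamma(s+1)}{R(s)/\Gamma(s)} = \frac{1}{q-\psi(w-s)}$, which matches $\hat F(s+1)/\hat F(s)$ above.

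Finally, since $\hat F$ and $R/\Gamma$ are both analytic and zero-free on the strip $0 < \re(s) < 1+w-\Phi^-(q)$, satisfy the same multiplicative functional equation with the $1$-periodic-factor-free multiplier $1/(q-\psi(w-s))$, and agree at $s=1$, their ratio is a $1$-periodic analytic function equal to $1$ at $s=1$; a standard growth estimate on the Mellin transforms along vertical lines (both $\hat{\mathcal M}_{\tilde q}$ and $R$ decay sub-exponentially in $\im(s)$ inside the strip, by the same arguments as in Lemma \ref{lemma1} and the known asymptotics of $\Gamma$ and of ${\mathcal M}_{0,q}$) forces this periodic function to be constant, hence identically $1$. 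The main obstacle I anticipate is the uniqueness step: carefully pinning down the correct strip of analyticity for $\hat{\mathcal M}_{\tilde q}$ under the measure change (translating $\Phi^\pm_\q$ back to $\Phi^\pm$ of the original process and checking the endpoint $1+w-\Phi^-(q)$), and justifying the vertical-line growth bound needed to conclude that a $1$-periodic analytic function with controlled growth is constant — this is where one must invoke the Weierstrass product representation from Patie and Savov \cite{Patie_Savov} rather than just the bare functional equation.
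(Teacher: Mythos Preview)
Your proposal is correct and follows essentially the same approach as the paper: both recognize $\hat J$ as the exponential functional of the dual process $-X$ under $\q$, verify that the candidate formula satisfies the right functional equation with normalization $R(1)=1$, and then invoke a uniqueness argument based on analyticity, zero-freeness, and growth along vertical lines. The only cosmetic difference is that the paper packages the uniqueness step by citing Proposition~2 of \cite{Kuz2012} (whose three hypotheses are exactly the properties you check), whereas you sketch the $1$-periodic-ratio argument directly.
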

\begin{proof}
Let us denote $Y_t=-X_t$: under the measure $\q$ this is a L\'evy process with the Laplace exponent 
$\psi_Y(z)=\psi(w-z)-\psi(w)$. Let us also denote $\theta:=w-\Phi^-(q)$ and the function in the right-hand side of 
\eqref{eqn_Mellin_J_q} by $f(s)$. According to Proposition 2 in \cite{Kuz2012}, in order to establish Lemma 
\ref{lemma_1_over_M} we need to check the following three conditions
\begin{itemize}
 \item[(i)] $f(s)$ is analytic and zero-free in the strip $\re(s)\in (0,1+\theta)$,
 \item[(ii)] $f(1)=1$ and $f(s+1)=s f(s)/(\tilde q-\psi_Y(s))$ for all $s\in (0,\theta)$,  
 \item[(iii)] $|f(s)|^{-1}=o(\exp(2 \pi |\im(s)|))$ as $\im(s)\to \infty$, $\re(s)\in (0,1+\theta)$.
\end{itemize}
Condition (i) follows from Lemma \ref{lemma1}. Let us check condition (ii): we use \eqref{functional_equation_F} and check that
\begin{align*}
f(s+1)=\frac{\Gamma(s+1)\Gamma(w-s)}{{\mathcal M}_{0,q}(w-s)}=
\frac{s}{q-\psi(w-s)} \frac{\Gamma(s)\Gamma(w-s+1)}{{\mathcal M}_{0,q}(w-s+1)}
=\frac{s}{\tilde q-\psi_Y(s)} f(s).
\end{align*}
To check condition (iii) we use the well-known asymptotic result
$$|\Gamma(a+\i b)|=\sqrt{2 \pi} \exp(-\pi |b|/2 + (a-1/2)\ln(|b|)+O(1)), \;\;\; b\to \infty,$$
which holds uniformly in $a$ on compact subsets of $\r$, and check that 
$$
\big| 1/f(s) \big |=\Big |\frac{{\mathcal M}_{0,q}(1+w-s)}{\Gamma(s)\Gamma(1+w-s)} \Big |
\le {\mathcal M}_{0,q}(1+w-\re(s)) \times O(e^{3\pi |\im(s)|/2}). 
$$ 
Thus all three conditions are satisfied and we have proved \eqref{eqn_Mellin_J_q}. 
\end{proof}

\vspace{0.2cm}
\noindent
{\bf Proof of Theorem \ref{thm_main}:}
We recall that $I_{x,q}$ has the same distribution as $U_{\ee(q)}=e^{X_{\ee(q)}}(x+\hat J_{\ee(q)})$, where $\hat J_t$ is defined by \eqref{def_J_t}.  
Assume that $q>0$ and $w\in (\max(-1,\Phi^-(q)),0)$, so that $q-\psi(w)>0$. According 
to Proposition \ref{prop_finite_Mellin_transform}, ${\mathcal M}_{x,q}(1+w)<\infty$ and we can write 
\begin{equation}\label{eqn_proof1}
{\mathcal M}_{x,q}(1+w)=\e\left[e^{wX_{\ee(q)}} (x+\hat J_{\ee(q)})^w \right]=
\int_0^{\infty} qe^{-q t} \e\left[e^{wX_t} (x+ \hat J_t)^w \right]\d t.
\end{equation}
Next, define the measure $\q$ as in \eqref{def_measure_Q} and denote $\tilde q=q-\psi(w)$. 
From \eqref{eqn_proof1} we find 
\begin{align}\label{eqn_proof2}
{\mathcal M}_{x,q}(1+w)&=\int_0^{\infty} qe^{-q t} \e\left[e^{wX_t} (x+\hat J_t)^w \right]\d t\\
\nonumber
&=
\int_0^{\infty} qe^{-(q-\psi(w)) t} \e_{\q}\left[(x+\hat J_t)^w \right]\d t=
\frac{q}{\tilde q} \e_{\q} \left[ (x+\hat J_{\ee(\tilde q)})^w \right].
\end{align}

Next, we take $z \in (0,-w)$, use \eqref{eqn_proof2} and compute
\begin{align}\label{eqn_proof3}
\nonumber \int_0^{\infty} x^{z-1} {\mathcal M}_{x,q}(1+w) \d x
&= \frac{q}{\tilde q}  \int_0^{\infty} x^{z-1}\e_{\q} \left[ (x+\hat J_{\ee(\tilde q)})^w \right] \d x \\ \nonumber
&=\frac{q}{\tilde q} \e_{\q} \left[ \int_0^{\infty} x^{z-1} (x+\hat J_{\ee(\tilde q)})^w \d x \right]\\ \nonumber
&=\frac{q}{\tilde q} \e_{\q} \left[ \left(\hat J_{\ee(\tilde q)} \right)^{z+w} \int_0^{\infty} y^{z-1} (y+1)^w \d y \right]\\
&=\frac{q}{\tilde q}  \e_{\q} \left[ \left(\hat J_{\ee(\tilde q)} \right)^{z+w} \right] \times 
\frac{\Gamma(z) \Gamma(-w-z)}{\Gamma(-w)}. \\ \nonumber
&=\frac{q}{\tilde q} \frac{{\mathcal M}_{0,q}(w)}{\Gamma(w)} \times \frac{\Gamma(1+z+w)\Gamma(-z)}{{\mathcal M}_{0,q}(-z)} \times 
\frac{\Gamma(z) \Gamma(-w-z)}{\Gamma(-w)},
\end{align} 
where we used Fubini's theorem in the second step, change of variables $x=J_{\ee(\tilde q)} y$ in the third step, the well-known beta-function integral in the fourth step and Lemma \ref{lemma_1_over_M} in the fifth step. 

Finally, from \eqref{functional_equation_F} we find that
$$
\frac{1}{\tilde q}\frac{{\mathcal M}_{0,q}(w)}{\Gamma(w)}=
\frac{1}{q-\psi(w)}\frac{{\mathcal M}_{0,q}(w)}{\Gamma(w)}=\frac{{\mathcal M}_{0,q}(1+w)}{\Gamma(1+w)}.
$$ 
We also use the reflection formula for the gamma function and rewrite \eqref{eqn_proof3} in the form
$$
\nonumber \int_0^{\infty} x^{z-1} {\mathcal M}_{x,q}(1+w) \d x
=-\frac{ \pi q \sin(\pi w) {\mathcal M}_{0,q}(1+w)}
{ z \sin(\pi z) {\mathcal M}_{0,q}(-z)\sin(\pi (w+z))},
$$
from which formula \eqref{formula_M_xq_general} follows by the inverse Mellin transform. 
\qed

\section{Case study: Kou process}\label{section_Kou_process}

In this section we demonstrate how Theorem \ref{thm_main} can be used to compute explicitly the density of the exponential 
functional $I_{x,q}$ for Kou jump-diffusion process. The latter is defined as follows: 
\begin{align}
X_t=\mu t + \sigma W_t + \sum\limits_{j=1}^{N_t} \xi_i, \label{X}
\end{align}
where $\sigma>0$, $\mu \in \r$,   $N_t$ is a Poisson process with intensity $\lambda$ and $\xi_i$ are i.i.d. random variables having the probability density function
$$
p_{\xi}(x)=p \rho e^{-\rho x} {\mathbf 1}_{\{ x>0\}} + (1-p) \hat \rho e^{\hat \rho x} {\mathbf 1}_{\{x<0\}}, 
$$
for some $p \in (0,1)$ and $\rho, \hat \rho >0$. 
The Laplace exponent is easily seen to be equal to 
$$
\psi(z)=\mu z + \frac{\sigma^2}{2} z^2 + \lambda p \frac{z}{\rho-z}- \lambda (1-p) \frac{z}{\hat{\rho}+z}.   
$$
For $q>0$ the rational function $\psi(z)=q$ has four zeros $\{-\hat\zeta_2,-\hat \zeta_1, \zeta_1, \zeta_2\}$ and two poles 
$\{-\hat \rho,  \rho\}$ which satisfy the interlacing property
$$
-\hat \zeta_2<-\hat \rho<-\hat \zeta_1 < 0 < \zeta_1 <\rho <\zeta_2. 
$$
The Mellin transform ${\mathcal M}_{0,q}(s)$ was computed in Cai and Kou \cite{Cai_Kou} (see also \cite{Kuz2012}) and is given by
\begin{equation}\label{eqn_M_0q_Kou_process}
{\mathcal M}_{0,q}(s)=A^{1-s} \Gamma(s) \frac{\mathcal G(s)}{\mathcal G(1)},
\end{equation} 
where $A=\sigma^2/2$ and 
$$
{\mathcal G}(s):=
\Gamma\Big[ \begin{array}{c}
1+\zeta_1-s, \; 1+\zeta_2-s, \; \hat \rho+s \\
1+\rho-s, \;\hat \zeta_1+s, \;\hat \zeta_2+s 
\end{array} \Big].
$$
In the above formula (and everywhere else in this paper) we use the notation 
\begin{equation}\label{gamma_product_notation}
\Gamma\Big[ \begin{array}{c}
a_1, \dots, a_p \\
b_1, \dots, b_q
\end{array} \Big] := \frac{\prod_{i=1}^p \Gamma(a_i)}{ \prod_{j=1}^q \Gamma(b_j)}.
\end{equation}

Our first main result in this section is an explicit expression for the Mellin transform ${\mathcal M}_{x,q}(s)$. 
\begin{proposition}\label{prop_Mellin_transform_Kou} For $0 \vee (1-\hat \zeta_1) < \re(s)< 1$ 
\begin{align}\label{eqn_formula_Mxq}
{\mathcal M}_{x,q}(s)&=q A^{-s} 
\Gamma\Big[ \begin{array}{c}
1+\zeta_1-s, \; 1+\zeta_2-s, \; \hat \rho+s \\
1-s, \; 1+\rho-s, \;\hat \zeta_1+s, \;\hat \zeta_2+s 
\end{array} \Big]
G_{4,5}^{3,3}
\Big( \begin{array}{c}
1-s, 1, -\rho, \hat \rho \\
1-s, \hat \zeta_1, \hat \zeta_2, -\zeta_1,-\zeta_2 
\end{array} \Big \vert \frac{1}{Ax} \Big),
\end{align} 
where $G$ is the Meijer G-function defined in \eqref{def_Meijer_G} of Appendix \ref{AppendixA}.
\end{proposition}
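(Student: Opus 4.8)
The plan is to feed the explicit Kou-process Mellin transform \eqref{eqn_M_0q_Kou_process} into the general integral representation \eqref{formula_M_xq_general} of Theorem \ref{thm_main} and to recognize the resulting Mellin--Barnes integral as a Meijer $G$-function. First I would set $w=s-1$; for the Kou process $\Phi^-(q)=-\hat\zeta_1$, so the admissible range $w\in(\max(-1,\Phi^-(q)),0)$ in Theorem \ref{thm_main} becomes exactly $0\vee(1-\hat\zeta_1)<\re(s)<1$, which is the range claimed in the Proposition, so no extra analytic continuation is needed. I then substitute \eqref{eqn_M_0q_Kou_process} for both ${\mathcal M}_{0,q}(1+w)={\mathcal M}_{0,q}(s)$ in the prefactor and ${\mathcal M}_{0,q}(-z)$ inside the integral of \eqref{formula_M_xq_general}, where $z$ runs over $c+\i\r$ with $c\in(0,1-s)$. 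The reciprocal $1/{\mathcal M}_{0,q}(-z)=A^{-1-z}{\mathcal G}(1)/(\Gamma(-z){\mathcal G}(-z))$ is legitimate on this contour, because $c<1-s<\hat\zeta_1$ gives $\re(-z)=-c\in(\Phi^-(q),0)$ and on that strip Lemma \ref{lemma1} guarantees that ${\mathcal M}_{0,q}/\Gamma$ is analytic and zero-free.

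The next step is to clear every trigonometric factor using the reflection formula $\Gamma(u)\Gamma(1-u)=\pi/\sin(\pi u)$ together with $z\Gamma(-z)=-\Gamma(1-z)$: one gets $1/(z\sin(\pi z)\Gamma(-z))=-\Gamma(z)/\pi$, while $\sin(\pi(s-1))=-\sin(\pi s)$ turns the prefactor $q\sin(\pi w){\mathcal M}_{0,q}(1+w)$ into $-\pi qA^{1-s}{\mathcal G}(s)/(\Gamma(1-s){\mathcal G}(1))$, and $\sin(\pi(s-1+z))=-\sin(\pi(s+z))$ turns the last factor into $\Gamma(s+z)\Gamma(1-s-z)/\pi$. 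After collecting the powers of $A$ and of $x$ into $(Ax)^{-z}$, all the $\pi$'s and the ${\mathcal G}(1)$'s cancel and \eqref{formula_M_xq_general} takes the form
\[
{\mathcal M}_{x,q}(s)=qA^{-s}\,\frac{{\mathcal G}(s)}{\Gamma(1-s)}\times\frac{1}{2\pi\i}\int_{c+\i\r}\frac{\Gamma(z)\Gamma(s+z)\Gamma(1+\rho+z)\,\Gamma(1-s-z)\Gamma(\hat\zeta_1-z)\Gamma(\hat\zeta_2-z)}{\Gamma(1+\zeta_1+z)\Gamma(1+\zeta_2+z)\Gamma(\hat\rho-z)}(Ax)^{-z}\,\d z,
\]
where the surviving prefactor $qA^{-s}{\mathcal G}(s)/\Gamma(1-s)$ is precisely the Gamma-quotient displayed in \eqref{eqn_formula_Mxq}.

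It remains to identify the contour integral. Matching the integrand term by term against the Mellin--Barnes definition \eqref{def_Meijer_G}, it equals $G^{3,3}_{4,5}$ with upper parameters $(1-s,1,-\rho,\hat\rho)$, lower parameters $(1-s,\hat\zeta_1,\hat\zeta_2,-\zeta_1,-\zeta_2)$ and argument $1/(Ax)$; the coincident pair $a_1=b_1=1-s$ is what produces the factors $\Gamma(s+z)\Gamma(1-s-z)$. Two verifications make this rigorous: (i) the contour $\re(z)=c$ is admissible, since it lies to the right of all poles of $\Gamma(z),\Gamma(s+z),\Gamma(1+\rho+z)$ (located at $\re(z)\le 0$, $-s$, $-1-\rho$, all $<c$) and to the left of all poles of $\Gamma(1-s-z),\Gamma(\hat\zeta_1-z),\Gamma(\hat\zeta_2-z)$ (located at $\re(z)=1-s$, $\hat\zeta_1$, $\hat\zeta_2$, all $>c$ because $c<1-s<\hat\zeta_1<\hat\zeta_2$); and (ii) by Stirling's asymptotics, exactly as in the proof of Lemma \ref{lemma_1_over_M}, on the line $\re(z)=c$ the integrand decays like $e^{-\tfrac{3\pi}{2}|\im(z)|}$ (six Gamma factors in the numerator, three in the denominator), so the integral converges absolutely and is analytic in $s$ and $x$ throughout the stated region.

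The computation is essentially routine and requires no analytic input beyond Theorem \ref{thm_main} and Lemma \ref{lemma1}. The one place that demands care is the middle step: tracking the reflection-formula manipulations together with the unit shifts hidden inside $\sin(\pi(s-1))$ and $\sin(\pi(s-1+z))$, and lining the resulting Gamma-quotient up with the precise sign and argument conventions of \eqref{def_Meijer_G}, so that the argument of the Meijer $G$-function comes out as $1/(Ax)$ rather than $Ax$.
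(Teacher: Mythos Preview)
Your proposal is correct and follows essentially the same route as the paper: substitute the explicit Kou Mellin transform \eqref{eqn_M_0q_Kou_process} into the integral formula \eqref{formula_M_xq_general}, clear the sines via the reflection formula, and recognize the resulting Mellin--Barnes integral as a Meijer $G$-function. The only cosmetic difference is that the paper first identifies the integral as $G^{3,3}_{5,4}(\,\cdot\,|\,Ax)$ and then applies the inversion identity \eqref{Meijer_G_x_to_1/x} to reach $G^{3,3}_{4,5}(\,\cdot\,|\,1/(Ax))$, whereas you match the latter form directly (implicitly via the substitution $z\mapsto -z$).
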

\begin{proof}
Formula \eqref{eqn_M_0q_Kou_process} and Theorem \ref{thm_main} tells us that for 
$- (1 \wedge \hat \zeta_1) <w<-c<0$ we have 
\begin{align*}
{\mathcal M}_{x,q}(1+w)&=
q \sin(\pi w) A^{-w} 
\Gamma\Big[ \begin{array}{c}
1+w, \zeta_1-w, \; \zeta_2-w, \; \hat \rho+1+w \\
\rho-w, \;\hat \zeta_1+1+w, \;\hat \zeta_2+1+w 
\end{array} \Big]\\
&\times 
\frac{-1}{2i} 
\int_{c+\i \r} 
\Gamma\Big[ \begin{array}{c}
1+\rho+z, \;\hat \zeta_1-z, \;\hat \zeta_2-z \\
-z, 1+\zeta_1+z, \; 1+\zeta_2+z, \; \hat \rho-z 
\end{array} \Big]
\frac{A^{-1-z} x^{-z} \d z}{z \sin(\pi z) \sin(\pi (w+z))}.
\end{align*}
Using the reflection formula for the Gamma function we rewrite the above equation in the form
\begin{align*}
{\mathcal M}_{x,q}(1+w)&=
q  A^{-1-w} 
\Gamma\Big[ \begin{array}{c}
\zeta_1-w, \; \zeta_2-w, \; \hat \rho+1+w \\
-w, \;  \rho-w, \;\hat \zeta_1+1+w, \;\hat \zeta_2+1+w 
\end{array} \Big]\\
&\times 
\frac{1}{2\pi i} 
\int_{c+\i \r} 
\Gamma\Big[ \begin{array}{c}
1+w+z, \; z, \; 1+\rho+z, \; -w-z , \;\hat \zeta_1-z, \;\hat \zeta_2-z \\
\hat \rho-z, \; 1+\zeta_1+z, \; 1+\zeta_2+z
\end{array} \Big]
(Ax)^{-z} \d z.  
\end{align*}
Applying formula \eqref{def_Meijer_G} we conclude that for all 
$- (1 \wedge \hat \zeta_1) <w<0$
\begin{align}\label{proof_eqn_Mxq_1}
{\mathcal M}_{x,q}(1+w)&=
q  A^{-1-w} 
\Gamma\Big[ \begin{array}{c}
\zeta_1-w, \; \zeta_2-w, \; \hat \rho+1+w \\
-w, \;  \rho-w, \;\hat \zeta_1+1+w, \;\hat \zeta_2+1+w 
\end{array} \Big]\\
\nonumber
&\times 
G_{5,4}^{3,3}
\Big( \begin{array}{c}
1+w, 1-\hat \zeta_1, 1-\hat \zeta_2, 1+\zeta_1,1+\zeta_2 \\
1+w, 0, 1+\rho, 1-\hat \rho \\ 
\end{array} \Big \vert Ax \Big).  
\end{align}
Note that both conditions \eqref{condition_A} and \eqref{condition_B} are satisfied, since in our case we have
\begin{align*}
a&=\max(1+w,1-\hat \zeta_1,1-\hat \zeta_2)=\max(1+w,1-\hat \zeta_1) \in (0,1),\\
b&=\min(0,1+w,1+\rho)=0.
\end{align*}
and $c \in (-b,1-a)$. The desired result \eqref{eqn_formula_Mxq} is obtained from \eqref{proof_eqn_Mxq_1}  by changing the variable $w=s-1$ and applying formula \eqref{Meijer_G_x_to_1/x}. 
\end{proof}

For the rest of this section we will work under the following

\vspace{0.25cm}
\noindent
{\bf Assumption 1:}  $\zeta_2-\zeta_1 \notin {\mathbb N}$ and $\hat \zeta_2-\hat \zeta_1 \notin {\mathbb N}$.

\vspace{0.25cm}
\begin{definition}
We define the function $f_{x,q}(y)$ as follows:  for $y>x$ 
\begin{align}\label{density function right side of x}
f_{x,q}(y)&:=\Big \{\frac{qx^{\zeta_1}+\zeta_1{\mathcal M}_{x,q}(\zeta_1)}{\psi'(\zeta_1)}
y^{-1-\zeta_1} 
{}_3F_3 \Big( \begin{array}{c}
1+\zeta_1, 1+\zeta_1-\rho, 1+\zeta_1+\hat \rho \\
 1+\zeta_1-\zeta_2, 1+\zeta_1+\hat \zeta_1, 1+\zeta_1+\hat \zeta_2 
\end{array} \Big \vert -\frac{1}{Ay} \Big) \Big \}\\
\nonumber
&\;+ \Big \{{\textnormal{the same expression with $\zeta_1$ and  $\zeta_2$ interchanged}}\Big \},
\end{align}
and for $0<y<x$ 
\begin{align}\label{density function left side of x}
f_{x,q}(y)&:=
\Big \{q (Ax)^{-\hat \zeta_1}\frac{\sin(\pi(\hat \rho-\hat \zeta_1))}{\sin(\pi(\hat \zeta_2-\hat \zeta_1))}
{}_3\Phi_3 \Big( \begin{array}{c}
\hat \zeta_1, 1+\hat \zeta_1+\rho, 1+\hat \zeta_1-\hat \rho \\
1+\hat \zeta_1-\hat \zeta_2, 1+\hat \zeta_1+\zeta_1, 1+\hat \zeta_1+\zeta_2
\end{array} \Big \vert \frac{1}{Ax} \Big) \\
\nonumber
&\qquad \qquad\qquad \qquad\qquad \;\;\; \times
G_{3,4}^{3,1}
\Big( \begin{array}{c}
 1-\hat \rho, 1, 1+\rho \\
 1+\zeta_1, 1+\zeta_2, 1-\hat \zeta_1, 1-\hat \zeta_2
\end{array} \Big \vert \frac{1}{Ay} \Big)\Big\}\\ \nonumber
&\;+ \Big\{{\textnormal{the same expression with $\hat \zeta_1$ and  $\hat \zeta_2$ interchanged}}\Big\}.
\end{align}
In the above formula $\Phi$ denotes the regularized hypergeometric function, as defined in \eqref{def_pPhir} of Appendix \ref{AppendixA}.
\end{definition}

\begin{theorem}\label{thm_density_function} 
The probability density function of $I_{x,q}$ is $f_{x,q}(y)$.
\end{theorem}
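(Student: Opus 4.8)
The plan is to recover the density of $I_{x,q}$ by inverting its Mellin transform. Since $\sigma>0$, the random variable $X_{\ee(q)}$ — and hence $I_{x,q}=xe^{X_{\ee(q)}}+I_{0,q}$ — has a density, so it will be enough to show that $f_{x,q}(y)=\tfrac{1}{2\pi\i}\int_{c+\i\r}{\mathcal M}_{x,q}(s)y^{-s}\d s$ for a suitable $c$ in the strip of analyticity. The key structural input I would record first is the functional equation
$$(q-\psi(s-1)){\mathcal M}_{x,q}(s)=qx^{s-1}+(s-1){\mathcal M}_{x,q}(s-1),$$
which follows from $I_{x,q}\stackrel{d}{=}U_{\ee(q)}$ together with the resolvent identity $\e_x[g(U_{\ee(q)})]=g(x)+q^{-1}\e_x[{\mathcal L}^{(U)}g(U_{\ee(q)})]$ applied to $g(y)=y^{s-1}$, using ${\mathcal L}^{(U)}[y^{s-1}]=\psi(s-1)y^{s-1}+(s-1)y^{s-2}$ (it can alternatively be checked directly from \eqref{eqn_formula_Mxq}). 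Combined with Proposition \ref{prop_finite_Mellin_transform} and the explicit formula \eqref{eqn_formula_Mxq}, this pins down the meromorphic continuation of ${\mathcal M}_{x,q}$: apart from its strip of analyticity it has only simple poles (simplicity is exactly what Assumption 1 guarantees), the ones to the right of the strip being located at $s=1+\zeta_1+n$ and $s=1+\zeta_2+n$, $n\ge0$, and those to the left forming the ``reflected'' sequences produced by iterating the functional equation downwards and by the Meijer $G$-factor in \eqref{eqn_formula_Mxq}.

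For $y>x$ the plan is to push the contour to $\re(s)\to+\infty$ and sum the residues crossed. From the functional equation, $(q-\psi(s-1))\sim-\psi'(\zeta_1)(s-1-\zeta_1)$ near $s=1+\zeta_1$ gives
$$\mathrm{Res}_{s=1+\zeta_1}{\mathcal M}_{x,q}(s)=-\frac{qx^{\zeta_1}+\zeta_1{\mathcal M}_{x,q}(\zeta_1)}{\psi'(\zeta_1)},$$
and for $n\ge1$ the pole of ${\mathcal M}_{x,q}(s)$ at $1+\zeta_1+n$ comes only through the term $(s-1){\mathcal M}_{x,q}(s-1)/(q-\psi(s-1))$, so $\mathrm{Res}_{s=1+\zeta_1+n}{\mathcal M}_{x,q}(s)=\frac{\zeta_1+n}{q-\psi(\zeta_1+n)}\mathrm{Res}_{s=1+\zeta_1+n-1}{\mathcal M}_{x,q}(s)$. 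Using $q-\psi(z)=A(\zeta_1-z)(\zeta_2-z)(\hat\zeta_1+z)(\hat\zeta_2+z)/[(\rho-z)(\hat\rho+z)]$ each ratio becomes a quotient of Pochhammer symbols, and telescoping yields $\mathrm{Res}_{s=1+\zeta_1+n}{\mathcal M}_{x,q}(s)=\mathrm{Res}_{s=1+\zeta_1}{\mathcal M}_{x,q}(s)\cdot\frac{(1+\zeta_1)_n(1+\zeta_1-\rho)_n(1+\zeta_1+\hat\rho)_n}{(1+\zeta_1-\zeta_2)_n(1+\zeta_1+\hat\zeta_1)_n(1+\zeta_1+\hat\zeta_2)_n}\cdot\frac{(-1)^n}{A^n\,n!}$. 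Hence $-\sum_{n\ge0}\mathrm{Res}_{s=1+\zeta_1+n}{\mathcal M}_{x,q}(s)\,y^{-s}$ collapses to $\frac{qx^{\zeta_1}+\zeta_1{\mathcal M}_{x,q}(\zeta_1)}{\psi'(\zeta_1)}y^{-1-\zeta_1}\,{}_3F_3\bigl(\cdots\mid-\tfrac{1}{Ay}\bigr)$ with precisely the parameters in \eqref{density function right side of x}; adding the $\zeta_1\leftrightarrow\zeta_2$ family gives the claimed formula on $\{y>x\}$. One still must check that the integral over the vertical line $\re(s)=R$ tends to zero as $R\to\infty$: since the ${\mathcal M}_{0,q}$-part of ${\mathcal M}_{x,q}(s)$ decays super-exponentially while the $qx^{s-1}$-iteration contributes a piece of order $x^{s-1}/s^{2}$, one has ${\mathcal M}_{x,q}(s)y^{-s}=O((x/y)^{\re(s)}\re(s)^{-2})$, which vanishes exactly because $y>x$ — this is where the dichotomy at $y=x$ enters.

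For $0<y<x$ the plan is the mirror image: push the contour to $\re(s)\to-\infty$, where now $|x^{s}y^{-s}|=(x/y)^{\re(s)}\to0$ because $y<x$. The left poles are not carried by the gamma prefactor in \eqref{eqn_formula_Mxq} but by the Meijer $G$-factor $G^{3,3}_{4,5}(\cdots\mid1/(Ax))$, whose parameter $1-s$ moves with $s$; I would therefore insert its Mellin--Barnes integral — observing that the matched parameters $a_1=b_1=1-s$ produce $\Gamma(1-s-t)\Gamma(s+t)=\pi/\sin(\pi(s+t))$, which is exactly the kernel of Theorem \ref{thm_main} — and evaluate the resulting double Mellin--Barnes integral by taking residues first in the inner variable at $t=\hat\zeta_1+n$ and $t=\hat\zeta_2+n$. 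Summing the inner residues produces the factor $(Ax)^{-\hat\zeta_1}\,{}_3\Phi_3\bigl(\cdots\mid1/(Ax)\bigr)$ together with the constant $\sin(\pi(\hat\rho-\hat\zeta_1))/\sin(\pi(\hat\zeta_2-\hat\zeta_1))$ coming out of the reflection formula, while the remaining $s$-integral is recognised through \eqref{def_Meijer_G} as $G^{3,1}_{3,4}\bigl(\cdots\mid1/(Ay)\bigr)$; assembling the $\hat\zeta_1$- and $\hat\zeta_2$-contributions gives \eqref{density function left side of x}. Finally, since $I_{x,q}$ genuinely possesses a density and $f_{x,q}$ is continuous on $(0,\infty)\setminus\{x\}$ with Mellin transform ${\mathcal M}_{x,q}$ on the relevant strip, the two functions coincide, which is the assertion of Theorem \ref{thm_density_function}.

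I expect the $\{0<y<x\}$ part to be the main obstacle: one has to determine carefully which poles of the double Mellin--Barnes integrand are swept out as the contour is moved, to verify via Stirling's asymptotics for the gamma factors — together with the large-$|s|$ behaviour of the parameter-dependent Meijer $G$-function — that the discarded tails vanish and that the interchanges of summation and integration are legitimate, and then to reconcile the rather long string of gamma and sine factors thrown up by the reflection formula with the compact ${}_3\Phi_3$–$G^{3,1}_{3,4}$ form displayed in \eqref{density function left side of x}. The $\{y>x\}$ part, by contrast, is essentially a bookkeeping exercise once the functional equation and the residue recursion are in place.
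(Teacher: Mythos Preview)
Your approach is correct in outline and genuinely different from the paper's. The paper works \emph{forward}: it takes the candidate density $f_{x,q}$, computes its Mellin transform $I_1(s)+I_2(s)=\int_0^\infty y^{s-1}f_{x,q}(y)\,\d y$ explicitly via the integration formulas \eqref{Meijer_G_definite_integral} and \eqref{pFq_definite_integral}, expands every Meijer $G$-function into hypergeometric pieces via \eqref{Meijer_G_in_terms_of_pFq}, and after long algebra (Appendix~\ref{AppendixB}) reduces the identity $I_1(s)+I_2(s)={\mathcal M}_{x,q}(s)$ to the hypergeometric duality relation of \cite{FKY2015}. You instead work \emph{backward}, inverting ${\mathcal M}_{x,q}$ by shifting the Mellin--inversion contour and summing residues: the functional equation $(q-\psi(s-1)){\mathcal M}_{x,q}(s)=qx^{s-1}+(s-1){\mathcal M}_{x,q}(s-1)$ gives a clean recursion for the right-hand residues that telescopes directly into the ${}_3F_3$ of \eqref{density function right side of x}, while for $0<y<x$ you unfold the Meijer $G$-factor in \eqref{eqn_formula_Mxq} as a Mellin--Barnes integral and collect residues in the inner variable. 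Your route is more constructive---it explains where the formula comes from and avoids the external input from \cite{FKY2015}---but it carries more analytic weight: you must justify the contour shifts rigorously (vanishing of the line integral at $\re(s)=R\to+\infty$ when $y>x$, and the double--Barnes manipulation when $y<x$), and you must note that ${\mathcal M}_{x,q}(\zeta_2)$ in the residue formula is the analytically continued value, since $\zeta_2$ may lie outside the original strip $0<\re(s)<1+\zeta_1$. The paper's forward verification sidesteps all of this---computing the Mellin transform of a given integrable function and invoking uniqueness is analytically trivial---at the price of the heavy bookkeeping in Appendix~\ref{AppendixB} and the dependence on \cite{FKY2015}.
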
 
\begin{proof}
Applying formula \eqref{Meijer_G_asymptotics}, we check that for any $\epsilon>0$ small enough 
\begin{align}
\label{f_x_q_asymptotics_zero}
f_{x,q}(y)&=O(y^{\hat \rho-\epsilon}), \; {\textnormal{ as }} \;\;y\rightarrow 0,  \\
\label{f_x_q_asymptotics_infinity}
f_{x,q}(y)&=O(y^{-1-\zeta_1}),\; {\textnormal{ as }} \;\;y \rightarrow +\infty, 
\end{align}
so the function $y^{s-1} f_{x,q}(y)$ is integrable for $0 \vee (1-\hat \zeta_1) < \re(s)< 1$. For $s$ in this strip
we  define 
$$
I_1(s):=\int_0^{x} f_{x,q}(y) y^{s-1} \d y, 
\qquad \qquad 
I_2(s)=\int_x^{\infty} f_{x,q}(y) y^{s-1} \d y, 
$$
and now our goal is to check that $I_1(s)+I_2(s)={\mathcal M}_{x,q}(s)$ (where the right-hand side is given by 
\eqref{eqn_formula_Mxq}). 

First we use formula \eqref{Meijer_G_definite_integral} and obtain
\begin{align*}
I_1(s)&=\Big \{q A^{-\hat \zeta_1} x^{s-\hat \zeta_1}\frac{\sin(\pi(\hat \rho-\hat \zeta_1))}{\sin(\pi(\hat \zeta_2-\hat \zeta_1))}
{}_3\Phi_3 \Big( \begin{array}{c}
\hat \zeta_1, 1+\hat \zeta_1+\rho, 1+\hat \zeta_1-\hat \rho \\
1+\hat \zeta_1-\hat \zeta_2, 1+\hat \zeta_1+\zeta_1, 1+\hat \zeta_1+\zeta_2
\end{array} \Big \vert \frac{1}{Ax} \Big) 
 \\
\nonumber
&\qquad \qquad\qquad \qquad\qquad \;\;\; \times
G_{4,5}^{4,1}
\Big( \begin{array}{c}
 1-\hat \rho, 1, 1+\rho, s+1 \\
 s, 1+\zeta_1, 1+\zeta_2, 1-\hat \zeta_1, 1-\hat \zeta_2
\end{array} \Big \vert \frac{1}{Ax} \Big)\Big\}\\
\nonumber &+ \Big\{{\textnormal{the same expression with $\hat \zeta_1$ and  $\hat \zeta_2$ interchanged}}\Big\}. 
\end{align*}
Similarly, using formula \eqref{pFq_definite_integral} we find
\begin{align*}
I_2(s)&=\Big\{\frac{qx^{\zeta_1}+\zeta_1 {\mathcal M}_{x,q}(\zeta_1)}{\psi'(\zeta_1)}
\frac{x^{s-1-\zeta_1}}{1+\zeta_1-s} 
{}_4F_4 \Big( \begin{array}{c}
1+\zeta_1-s, 1+\zeta_1, 1+\zeta_1-\rho, 1+\zeta_1+\hat \rho \\
2+\zeta_1-s, 1+\zeta_1-\zeta_2, 1+\zeta_1+\hat \zeta_1, 1+\zeta_1+\hat \zeta_2 
\end{array} \Big \vert -\frac{1}{Ax} \Big)\Big\} \\
\nonumber
&+ \Big\{{\textnormal{the same expression with $\zeta_1$ and  $\zeta_2$ interchanged}}\Big\}. 
\end{align*}

Let us outline the plan for proving the identity 
\begin{equation}\label{identity_I1_I2_M}
I_1(s)+I_2(s)-{\mathcal M}_{x,q}(s)=0, \;\;\; {\textnormal{ for all $s$ in the strip $0 \vee (1-\hat \zeta_1) < \re(s)< 1$}}.
\end{equation}
First we use formula \eqref{Meijer_G_in_terms_of_pFq} and express all Meijer G-functions appearing in \eqref{identity_I1_I2_M} in terms of hypergeometric functions. 
This would give us an expression involving products of two hypergeometric functions. 
After simplifying this expression we would obtain the following identity
\begin{align}\label{4F4_identity}
\sum\limits_{i=1}^{5} 
\frac{(a_i-\rho)(a_i+\hat \rho)}{ \prod\limits_{\stackrel{1\le j \le 5}{j \neq i}} (a_i-a_j)}  &\times 
{}_{4}F_{4} \Big (
\begin{matrix}
1+a_i-\rho, 1+a_i+\hat \rho, 1+a_i, 1+a_i-s \\
1+a_i-a_1, \dots,*,\dots, 1+a_i-a_{5}
\end{matrix} \Big \vert 
-\frac{1}{Ax} \Big)
\\ \nonumber
&\times 
{}_{4}F_{4} \Big (
\begin{matrix}
 1+\rho-a_i,1-\hat \rho-a_i, -a_i, s-a_i \\
1+a_1-a_i, \dots, *, \dots, 1+a_{5}-a_i
\end{matrix} \Big \vert 
\frac{1}{Ax} \Big)=0, \;\;\; x \in {\mathbb R}\setminus \{0\}, 
\end{align}
where $[a_1, a_2, a_3, a_4, a_5]=[\zeta_1, \zeta_2, -\hat \zeta_1, -\hat \zeta_2, s-1]$
and the asterisk means that the term $1+a_i-a_i$ is omitted. The identity \eqref{4F4_identity} is known to be true: it is a special case of Theorem 1 in Feng et al. \cite{FKY2015}.  

The above steps of the proof, while conceptually simple, require very long computations. Therefore, we omit here all these details and we present them in Appendix \ref{AppendixB}.
\end{proof}

\begin{remark}
The algebraic manipulations needed in the last step of the proof of Theorem \ref{thm_density_function} (where we establish identity \eqref{identity_I1_I2_M}) are rather tedious (as can be seen in the Appendix \ref{AppendixB}). At the same time, it is  easy to confirm the validity of this identity by a numerical experiment: 
one simply needs to compute Meijer G-functions via \eqref{Meijer_G_in_terms_of_pFq} and the hypergeometric functions via series expansion \eqref{def_pFr}, and check that \eqref{identity_I1_I2_M} holds true with arbitrary choices of parameters. 
\end{remark}

In the next result we compute the distribution function of $I_{x,q}$. 

\begin{corollary}\label{cor_CDF}
For $y\ge x$
\begin{align}\label{CDF1}
\p(I_{x,q}>y)&=\Big \{ \frac{qx^{\zeta_1}+\zeta_1 {\mathcal M}_{x,q}(\zeta_1)}{\zeta_1 \psi'(\zeta_1)}
y^{-\zeta_1}
{}_3F_3 \Big( \begin{array}{c}
 1+\zeta_1-{ \rho}, 1+\zeta_1+\hat { \rho}, \zeta_1 \\
 1+\zeta_1-\zeta_2, 1+\zeta_1+\hat \zeta_1, 1+\zeta_1+\hat \zeta_2 
\end{array} \Big \vert -\frac{1}{Ay} \Big)\Big \}
\\
\nonumber
&+ \Big\{{\textnormal{the same expression with $\zeta_1$ and  $\zeta_2$ interchanged}}\Big\}
\end{align}
and for $0<y<x$ 
\begin{align}\label{CDF2}
\p(I_{x,q}<y)&=\Big \{\frac{q}{A} (Ax)^{-\hat \zeta_1}\frac{\sin(\pi(\hat { \rho}-\hat \zeta_1))}{\sin(\pi(\hat \zeta_2-\hat \zeta_1))}
{}_3\Phi_3 \Big( \begin{array}{c}
\hat \zeta_1, 1+\hat \zeta_1+{ \rho}, 1+\hat \zeta_1-\hat { \rho} \\
1+\hat \zeta_1-\hat \zeta_2, 1+\hat \zeta_1+\zeta_1, 1+\hat \zeta_1+\zeta_2
\end{array} \Big \vert \frac{1}{Ax} \Big) \\
\nonumber
&\qquad \qquad\qquad \qquad\qquad \;\;\; \times
G^{3,1}_{3,4}
\Big( \begin{array}{c}
 -\hat \rho, \rho, 1 \\
 \zeta_1, \zeta_2, -\hat \zeta_1, -\hat \zeta_2
\end{array} \Big \vert \frac{1}{Ay} \Big)\Big\}
\\
\nonumber
&+ \Big\{{\textnormal{the same expression with $\hat \zeta_1$ and  $\hat \zeta_2$ interchanged}}\Big\}.
\end{align}

\end{corollary}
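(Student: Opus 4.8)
The plan is to obtain both formulas by integrating the density $f_{x,q}$ furnished by Theorem \ref{thm_density_function}. The key observation is that on each of the two relevant ranges of integration --- $(y,\infty)$ for \eqref{CDF1} and $(0,y)$ for \eqref{CDF2} --- only one of the two branches \eqref{density function right side of x}, \eqref{density function left side of x} of $f_{x,q}$ is in force (since $(y,\infty)\subseteq(x,\infty)$ when $y\ge x$, and $(0,y)\subseteq(0,x)$ when $y<x$). Hence $\p(I_{x,q}>y)$ and $\p(I_{x,q}<y)$ are, respectively, the ``$s=1$ partial Mellin integrals'' of a single branch, and they can be evaluated with exactly the same two integral formulas \eqref{Meijer_G_definite_integral} and \eqref{pFq_definite_integral} that were already used to compute $I_1(s)$ and $I_2(s)$ in the proof of Theorem \ref{thm_density_function}. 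Since $I_{x,q}$ has a density there is no atom, so each complementary probability is just $1$ minus the stated one, and the corollary merely records whichever tail has the simpler closed form in each region.

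For $y\ge x$ I would write $\p(I_{x,q}>y)=\int_y^\infty f_{x,q}(t)\,\d t$, where only branch \eqref{density function right side of x} contributes. Applying \eqref{pFq_definite_integral} with Mellin exponent $1$ and lower limit $y$ in place of $x$ gives, for each of the two symmetric terms, a factor $y^{-\zeta_1}/\zeta_1$ times a ${}_4F_4$ whose numerator parameters are $\zeta_1,1+\zeta_1,1+\zeta_1-\rho,1+\zeta_1+\hat\rho$ and whose denominator parameters are $1+\zeta_1,1+\zeta_1-\zeta_2,1+\zeta_1+\hat\zeta_1,1+\zeta_1+\hat\zeta_2$, evaluated at $-1/(Ay)$. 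The numerator parameter $1+\zeta_1$ cancels the identical denominator parameter, so this ${}_4F_4$ collapses to the ${}_3F_3$ displayed in \eqref{CDF1}, and the extra factor $1/\zeta_1$ turns $\psi'(\zeta_1)$ into $\zeta_1\psi'(\zeta_1)$; the $\zeta_1\leftrightarrow\zeta_2$ term is identical. Equivalently --- and perhaps more transparently --- one can integrate the power series term by term, using $\int_y^\infty t^{-1-\zeta_1-n}\,\d t=y^{-\zeta_1-n}/(\zeta_1+n)$ together with the Pochhammer identity $(\zeta_1+n)(\zeta_1)_n=\zeta_1(1+\zeta_1)_n$, the interchange being justified by the asymptotics \eqref{f_x_q_asymptotics_infinity} and the fact that ${}_3F_3$ is entire.

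For $0<y<x$ I would write $\p(I_{x,q}<y)=\int_0^y f_{x,q}(t)\,\d t$, where only branch \eqref{density function left side of x} contributes and the ${}_3\Phi_3(\cdots\,|\,1/(Ax))$ factor is a constant in the integration variable (it carries the fixed parameter $x$ of $I_{x,q}$, not the integration endpoint). Thus one only needs $\int_0^y G_{3,4}^{3,1}(\cdots\,|\,1/(At))\,\d t$, the integrability at $t=0$ being guaranteed by \eqref{f_x_q_asymptotics_zero}. Applying \eqref{Meijer_G_definite_integral} with Mellin exponent $1$ and upper limit $y$ produces $y\,G_{4,5}^{4,1}(\cdots\,|\,1/(Ay))$, in which the added top parameter coincides with one of the bottom parameters; the standard Meijer G cancellation reduces this to a $G_{3,4}^{3,1}$, and the homogeneity relation $z^{-1}G_{p,q}^{m,n}(z\,|\,\mathbf a;\mathbf b)=G_{p,q}^{m,n}(z\,|\,\mathbf a-1;\mathbf b-1)$ shifts all parameters down by one while absorbing the combined factor $y\cdot(1/(Ay))=1/A$. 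This yields precisely the $G_{3,4}^{3,1}$ and the prefactor $q/A$ in \eqref{CDF2}; once more the $\hat\zeta_1\leftrightarrow\hat\zeta_2$ term is symmetric.

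The only genuinely delicate step is the Meijer G bookkeeping in the second case: tracking where the extra parameters land when \eqref{Meijer_G_definite_integral} is applied, verifying that the cancellation hypothesis (a top parameter equal to a bottom parameter) is actually met, and getting the argument shift right, since an off-by-one there changes the final answer. Everything else reduces either to the elementary ${}_4F_4\to{}_3F_3$ cancellation or to termwise integration of an entire series, and no analytic input beyond Theorem \ref{thm_density_function}, the asymptotics \eqref{f_x_q_asymptotics_zero}--\eqref{f_x_q_asymptotics_infinity}, and the hypergeometric and Meijer G identities already invoked in the paper is needed.
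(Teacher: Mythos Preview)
Your approach is correct and is exactly the paper's: it simply cites \eqref{density function right side of x} together with \eqref{pFq_definite_integral} for \eqref{CDF1}, and \eqref{density function left side of x} together with \eqref{Meijer_G_definite_integral}, the Meijer $G$ reduction, and \eqref{Meijer_G_times_x_power_c} for \eqref{CDF2}. On the bookkeeping point you flagged: after applying \eqref{Meijer_G_definite_integral} with $\alpha=-1$ the cancelling pair in the resulting $G^{4,1}_{4,5}$ is the newly inserted \emph{bottom} parameter $-\alpha=1$ and the \emph{existing top} parameter $1$ already present in the density's $G^{3,1}_{3,4}$ (not the added top parameter $1-\alpha=2$), so it is the reduction \eqref{Meijer_G_reduction2} that drops $m$ by one and yields the stated $G^{3,1}_{3,4}$.
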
 
\begin{proof}
Formula \eqref{CDF1} can be easily obtained from \eqref{density function right side of x} and \eqref{pFq_definite_integral}. Similarly, formula \eqref{CDF2} follows from \eqref{density function left side of x}, \eqref{Meijer_G_definite_integral}, 
\eqref{Meijer_G_reduction} and \eqref{Meijer_G_times_x_power_c}. 
\end{proof}

\begin{corollary}\label{cor_ECT}
Assume that $\zeta_1>1$. Then for $y\ge  x$
\begin{align}
\e[I_{x,q} \mathbf{1}_{\{I_{x,q}>y\}}]&=
\Big\{\frac{qx^{\zeta_1}+\zeta_1 {\mathcal M}_{x,q}(\zeta_1)}{\psi'(\zeta_1)(\zeta_1-1)}
y^{1-\zeta_1}
{}_4F_4 \Big( \begin{array}{c}
1+\zeta_1, 1+\zeta_1-{ \rho}, 1+\zeta_1+\hat { \rho}, \zeta_1-1 \\
 1+\zeta_1-\zeta_2, 1+\zeta_1+\hat \zeta_1, 1+\zeta_1+\hat \zeta_2, \zeta_1 
\end{array} \Big \vert -\frac{1}{Ay} \Big) \Big\}\\
\nonumber
&+ \Big\{{\textnormal{the same expression with $\zeta_1$ and  $\zeta_2$ interchanged}}\Big\},
\end{align}
and for $0<y<x$ 
\begin{align}
\e[I_{x,q} \mathbf{1}_{\{I_{x,q}<y\}}]&=
\Big\{\frac{qy^2 \sin(\pi(\hat { \rho}-\hat \zeta_1))}{(Ax)^{\hat \zeta_1}\sin(\pi(\hat \zeta_2-\hat \zeta_1))}
{}_3\Phi_3 \Big( \begin{array}{c}
\hat \zeta_1, 1+\hat \zeta_1+{ \rho}, 1+\hat \zeta_1-\hat { \rho} \\
1+\hat \zeta_1-\hat \zeta_2, 1+\hat \zeta_1+\zeta_1, 1+\hat \zeta_1+\zeta_2
\end{array} \Big \vert \frac{1}{Ax} \Big) \\
\nonumber
&\qquad \qquad\qquad \qquad\qquad \;\;\; \times
G^{4,1}_{4,5}
\Big( \begin{array}{c}
 1-\hat \rho, 1, 1+\rho, 3 \\
 2, 1+\zeta_1, 1+\zeta_2, 1-\hat \zeta_1, 1-\hat \zeta_2
\end{array} \Big \vert \frac{1}{Ay} \Big)\Big\}\\ 
\nonumber
&+ \Big\{{\textnormal{the same expression with $\hat \zeta_1$ and  $\hat \zeta_2$ interchanged}}\Big\}.
\end{align}
\end{corollary}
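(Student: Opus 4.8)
The plan is to obtain both formulas by integrating the density $f_{x,q}$ from Theorem \ref{thm_density_function} against $t$. Since $f_{x,q}$ is the probability density of $I_{x,q}$, for $y\ge x$ we have $\e[I_{x,q}\mathbf 1_{\{I_{x,q}>y\}}]=\int_y^{\infty} t\, f_{x,q}(t)\,\d t$, and for $0<y<x$ we have $\e[I_{x,q}\mathbf 1_{\{I_{x,q}<y\}}]=\int_0^{y} t\, f_{x,q}(t)\,\d t$. The computation then runs parallel to the proof of Corollary \ref{cor_CDF} almost verbatim, the only change being one extra power of $t$ in each integrand (the exponent $s$ of Corollary \ref{cor_CDF} is effectively shifted by one).

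For the case $y\ge x$ I would substitute \eqref{density function right side of x}. Each of the two symmetric summands has the form $(\text{const})\times t^{-1-\zeta_j}\,{}_3F_3(\cdots\mid -1/(At))$, with the constant equal to $(qx^{\zeta_j}+\zeta_j{\mathcal M}_{x,q}(\zeta_j))/\psi'(\zeta_j)$; multiplying by $t$ and integrating over $(y,\infty)$ reduces to evaluating $\int_y^{\infty} t^{-\zeta_j}\,{}_3F_3(\cdots\mid -1/(At))\,\d t$, which is exactly the incomplete integral covered by formula \eqref{pFq_definite_integral}. Applying it turns the ${}_3F_3$ into a ${}_4F_4$ with the additional numerator parameter $\zeta_j-1$ and denominator parameter $\zeta_j$, and produces the prefactor $y^{1-\zeta_j}/(\zeta_j-1)$ — precisely the shape claimed. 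The hypothesis $\zeta_1>1$ (hence $\zeta_2>\zeta_1>1$) enters here and only here: by \eqref{f_x_q_asymptotics_infinity} we have $t f_{x,q}(t)=O(t^{-\zeta_1})$, and $\zeta_1>1$ is exactly what is needed for integrability at $+\infty$ (equivalently, for $\e[I_{x,q}]<\infty$).

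For the case $0<y<x$ I would substitute \eqref{density function left side of x}. Now the ${}_3\Phi_3$ factor does not depend on the integration variable, so the entire $t$-dependence of each symmetric summand sits in $G_{3,4}^{3,1}(\cdots\mid 1/(At))$, and one is left to compute $\int_0^{y} t\, G_{3,4}^{3,1}(\cdots\mid 1/(At))\,\d t$, an incomplete Mellin-type integral of a Meijer G-function, handled by formula \eqref{Meijer_G_definite_integral}. This yields a $G_{4,5}^{4,1}$ with the extra numerator parameter $3$ and denominator parameter $2$, together with a factor $y^2$, matching the stated expression. Integrability at the origin is automatic, since by \eqref{f_x_q_asymptotics_zero} we have $t f_{x,q}(t)=O(t^{1+\hat\rho-\epsilon})$, so no further assumption is required near $0$.

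The only genuine work is bookkeeping: one must check that the parameter lists produced by \eqref{pFq_definite_integral} and \eqref{Meijer_G_definite_integral} line up exactly with the stated ${}_4F_4$ and $G_{4,5}^{4,1}$ — tracking which parameter is shifted and confirming the numerical prefactors $1/(\zeta_1-1)$, $y^{1-\zeta_1}$, $y^2$ and the argument/sign conventions. Interchanging the integral with the two-term symmetric sum is justified by absolute convergence under $\zeta_1>1$. I expect this matching of parameters and constants, rather than any conceptual difficulty, to be the main obstacle, and it is entirely analogous to what is done in the proof of Corollary \ref{cor_CDF}.
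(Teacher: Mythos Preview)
Your proposal is correct and follows exactly the same route as the paper, whose proof consists of the single sentence ``Same steps as in the proof of Corollary \ref{cor_CDF}.'' Your expansion of those steps---integrating $t\,f_{x,q}(t)$ using \eqref{density function right side of x} together with \eqref{pFq_definite_integral} for $y\ge x$, and \eqref{density function left side of x} together with \eqref{Meijer_G_definite_integral} for $0<y<x$---is precisely what is intended, and your identification of where the hypothesis $\zeta_1>1$ is used is accurate.
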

\begin{proof}
Same steps as in the proof of Corollary \ref{cor_CDF}. 
\end{proof}

\begin{remark} \label{remBM}
The jump-diffusion process \eqref{X} includes the Brownian motion with drift $X_t=\mu t+\sigma W_t$ as a special case when $\lambda=0$. In this case the expressions in Corollary \ref{cor_CDF} can be simplified. Let us denote
\[\nu:=\frac{2\mu}{\sigma^2}, \;\;\; \eta:=\frac{\sqrt{8q/\sigma^2+\nu^2}}2,  \;\;\; \kappa:=\frac{1-\nu}2.\]
 Then for $y\ge x$,
\begin{align*}
\p(I_{x,q}>y)=q \frac{\Gamma(\eta-\kappa+1/2) x^\kappa y^{1-\kappa}}{\Gamma(1+2\eta)(\eta+\kappa-1/2)}
e^{(1/x-1/y)/\sigma^2}  W_{\kappa,\eta}\Big(\frac{2}{\sigma^2 x}\Big) M_{\kappa-1,\eta}\Big( \frac2{\sigma^2 y}\Big),
\end{align*} 
while for $0<y< x$,
\begin{align*}
\p(I_{x,q}<y)=q \frac{\Gamma(\eta-\kappa+1/2) x^\kappa y^{1-\kappa}}{\Gamma(1+2\eta)}e^{(1/x-1/y)/\sigma^2} M_{\kappa,\eta}\Big(\frac{2}{\sigma^2 x}\Big) W_{\kappa-1,\eta}\Big( \frac2{\sigma^2 y}\Big).
\end{align*}
Here $M$ and $W$ denote the Whittaker functions, whose definitions and basic properties can be found in Olver et al. \cite{Olver}. 
 Similarly, expressions in Corollary \ref{cor_ECT} can be simplified: for $y\ge x$,
\begin{align*}
\e[I_{x,q} \mathbbm{1}_{\left\{ I_{x,q}>y\right\}}]= \frac{q\Gamma(\eta-\kappa+1/2) x^\kappa y^{2-\kappa}}{\Gamma(1+2\eta)(\eta+\kappa-1/2)}e^{(1/x-1/y)/\sigma^2}W_{\kappa,\eta}\Big(\frac{2}{\sigma^2 x}\Big) \Bigg[ \frac{M_{\kappa-2,\eta}\Big( \frac2{\sigma^2 y}\Big)}{\eta+\kappa-3/2} +M_{\kappa-1,\eta}\Big( \frac2{\sigma^2 y}\Big)\Bigg],
\end{align*} and for $0<y<x$,
\begin{align*}
\e[I_{x,q} \mathbbm{1}_{\left\{ I_{x,q}<y\right\}}]= \frac{q\Gamma(\eta-\kappa+1/2) x^\kappa y^{2-\kappa}}{\Gamma(1+2\eta)}e^{(1/x-1/y)/\sigma^2} M_{\kappa,\eta}\Big(\frac{2}{\sigma^2 x}\Big) \left[W_{\kappa-1,\eta}\Big( \frac2{\sigma^2 y}\Big)- W_{\kappa-2,\eta}\Big( \frac2{\sigma^2 y}\Big) \right].
\end{align*}  These expressions were obtained in Feng and Volkmer \cite[Proposition 3.4]{FenVol2} using spectral methods.
\end{remark}

\section{Applications}\label{section_applications}

As we have discussed in the introduction, exponential functionals arise naturally in the analysis of insurer's liabilities to variable annuity guaranteed benefits, due to the continual collection of management fees as a fixed percentage of policyholders' account value. In this section we apply our theoretical results obtained earlier and we compute various risk measures for the 
guaranteed minimum death benefit (GMDB), which is one of the most common types of investment guarantees in the market.

Assume that the equity index $\{S_t, t \ge 0\}$ is modeled by an exponential L\'evy process
\[S_t:=S_0 e^{X_t},\qquad t \ge 0,\] where $X$ is the Kou process, as defined in \eqref{X}. Assume, also, that the
 policyholder's investment account is driven by the equity-linking mechanism as in \eqref{equilink}. 
Recall that the GMDB net liability from an insurer's viewpoint is given by
\begin{align} \label{Lform} 
L:=e^{-rT_x}(F_0 e^{rT_x} -F_{T_x})_+-\int^{ T_x}_0 e^{-rs} m_d F_s \,\d s.
\end{align} 
Due  to the independence of mortality risk and equity risk, we obtain an expression of $\p(L>V)$ for $V \ge \mathrm{VaR}_\xi>0,$
\begin{align} 
\p(L > V)= \int^\infty_0 P(t,K) f(t) \, \d t,
\end{align} where $f$ is the probability density function of $T_x$,  $K:=(F_0-V)/(m_d F_0)$ and
\[ P(t,K):= \p \left( x e^{X^\ast_t}+\int^t_0 e^{X^\ast_s } \d s <K  \right)\] with $x=1/m_d$. The underlying L\'evy process 
$X^*$ is the same as the process $X$ in \eqref{X}, but with with $\mu$ replaced by 
\[\mu^\ast:=\mu-r-m.\]  The Laplace transform of $P$ with respect to $t$ is given by
\[ \tilde{P}(q,K):=\int^\infty_0 e^{-qt} P(t, K) \d t=\frac1q\p( I_{x,q}<K).\]
Similarly, we can show that
\begin{align} \mathrm{CTE}_p(L)=F_0 -\frac{m_d F_0}{1-p}   \int^\infty_0 Z(t,K) f(t) \d t,\label{CTE}\end{align} where
\[Z(t,K):= \e\left[\left(x e^{X^\ast_t}+\int^t_0 e^{X^\ast_s } \d s \right)\mathbbm{1}_{\left\{x e^{X^\ast_t}+\int^t_0 e^{X^\ast_s } \d s <K \right\}}\right].\] Its Laplace transform with respect to $t$ is given by
\[\tilde{Z}(q,K):=\int^\infty_0 e^{-qt} Z(t, K) \d t=\frac1q \e\left [I_{x,q} \mathbbm{1}_{\{I_{x,q}<K\}}\right]. \]

A common model for human mortality in the literature is the so-called Gompertz-Makeham law of mortality, which assumes that the death rate $\mu_x$ is the sum of a constant $A$ (to account for death due to accidents) and a component $B c^x$ (to account for aging): 
\[\mu_x =A+B c^x, \qquad A>0, B>0, c>1.\]
Its probability density function $f$ is given by
\begin{align}
f(t)=(A+Bc^{x+t})\exp \left\{-At-\frac{Bc^x (c^t-1)}{\ln c}\right\}. \label{morden}
\end{align}
As shown in Feng and Jing \cite{FJ}, we can always use a decomposition of a Hankel matrix to approximate $f$ by a combination of exponential functions with complex components and complex weights,
\[f(t)\approx \sum^M_{i=1} w_i e^{- s_i t}, \qquad \Re(s_i)>0.\]
There are many known methods in the literature for such approximations, most of which utilizes only real components and real weights. However, the Hankel matrix method has the advantage of using relatively small number of terms. Then, for large enough $M$,
\begin{align} \p(L>V)\approx \sum^M_{i=1} w_i \tilde{P}(s_i, K).\label{appx}\end{align}
Similarly, we can approximate the CTE risk measure by
\begin{align} \mathrm{CTE}_p(L)\approx F_0 -\frac{m_d F_0}{1-p} \sum^M_{i=1} w_i \tilde{Z}(s_i, K).\label{appxCTE}\end{align}

Let us illustrate the application to GMDB with a numerical example.

\begin{figure}[t]
	\centering
	\subfigure[Bases]{\includegraphics[width=6cm]{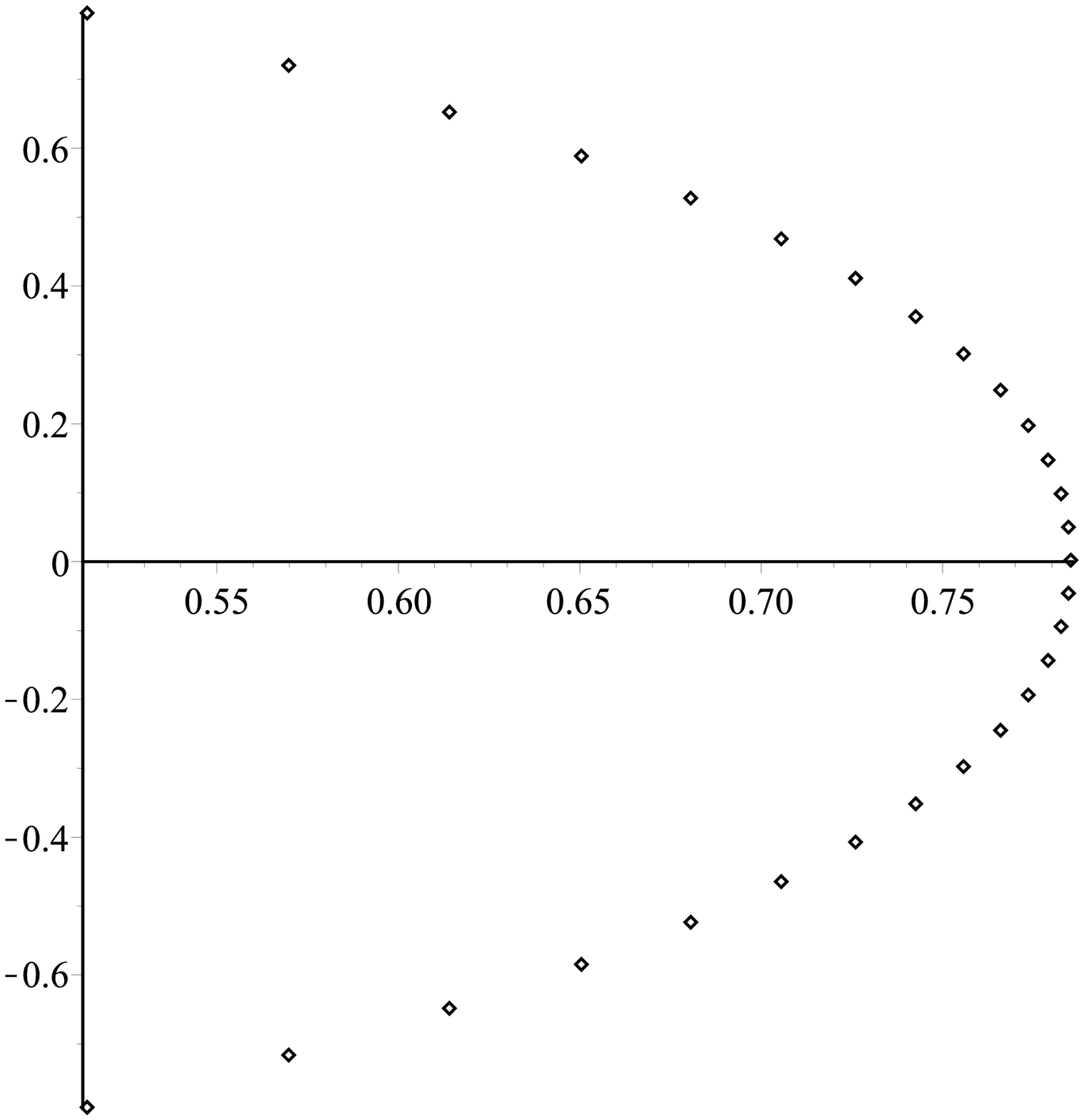}}\qquad
	\subfigure[Weights]{\includegraphics[width=6cm]{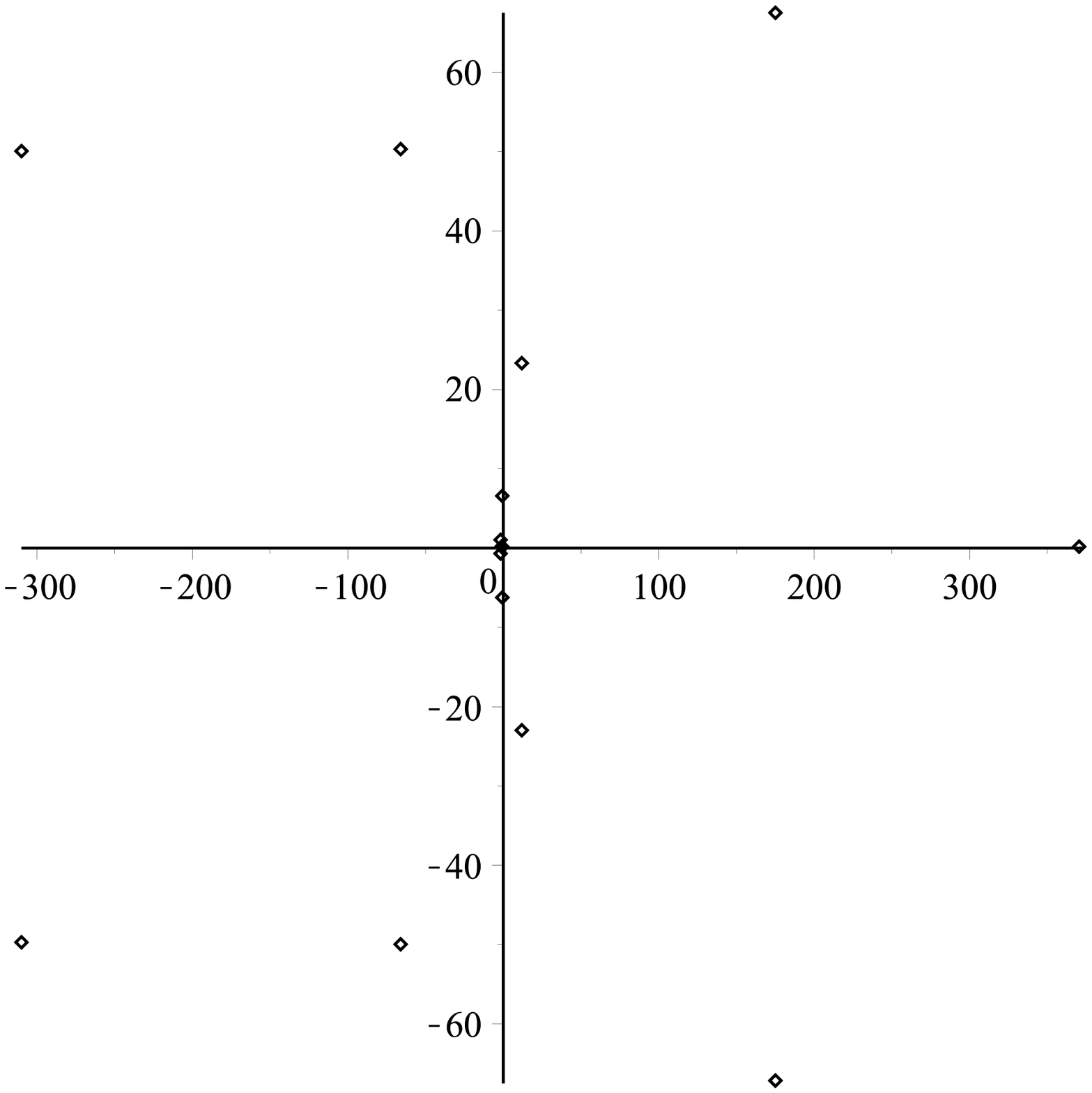}}
	\caption{Approximating exponential sum}
	\label{fig:weight}
\end{figure}

\begin{figure}[t]
	\centering
	\subfigure[Mortality density]{\includegraphics[width=6cm]{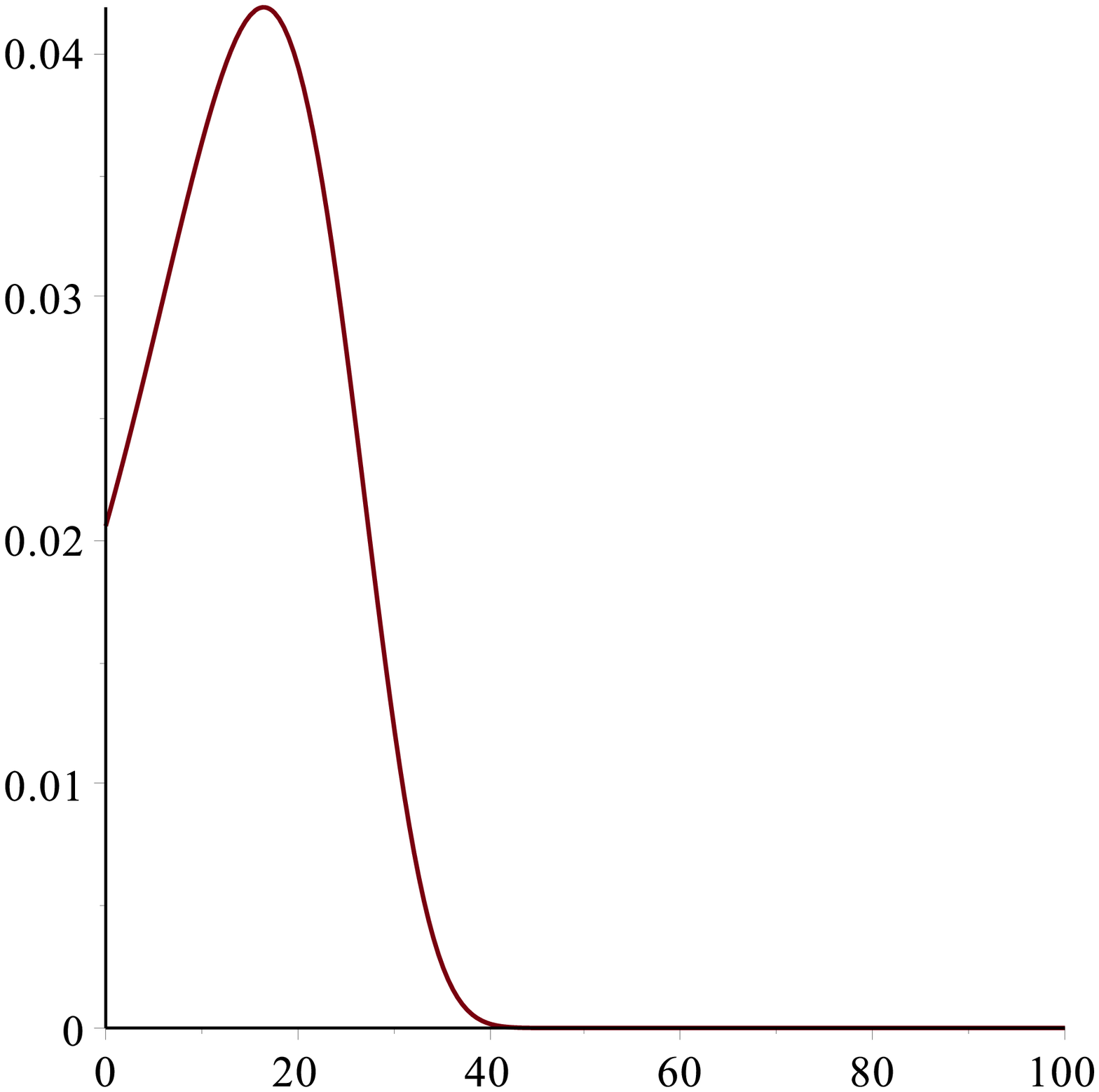}}\qquad
	\subfigure[Approximation error]{\includegraphics[width=6cm]{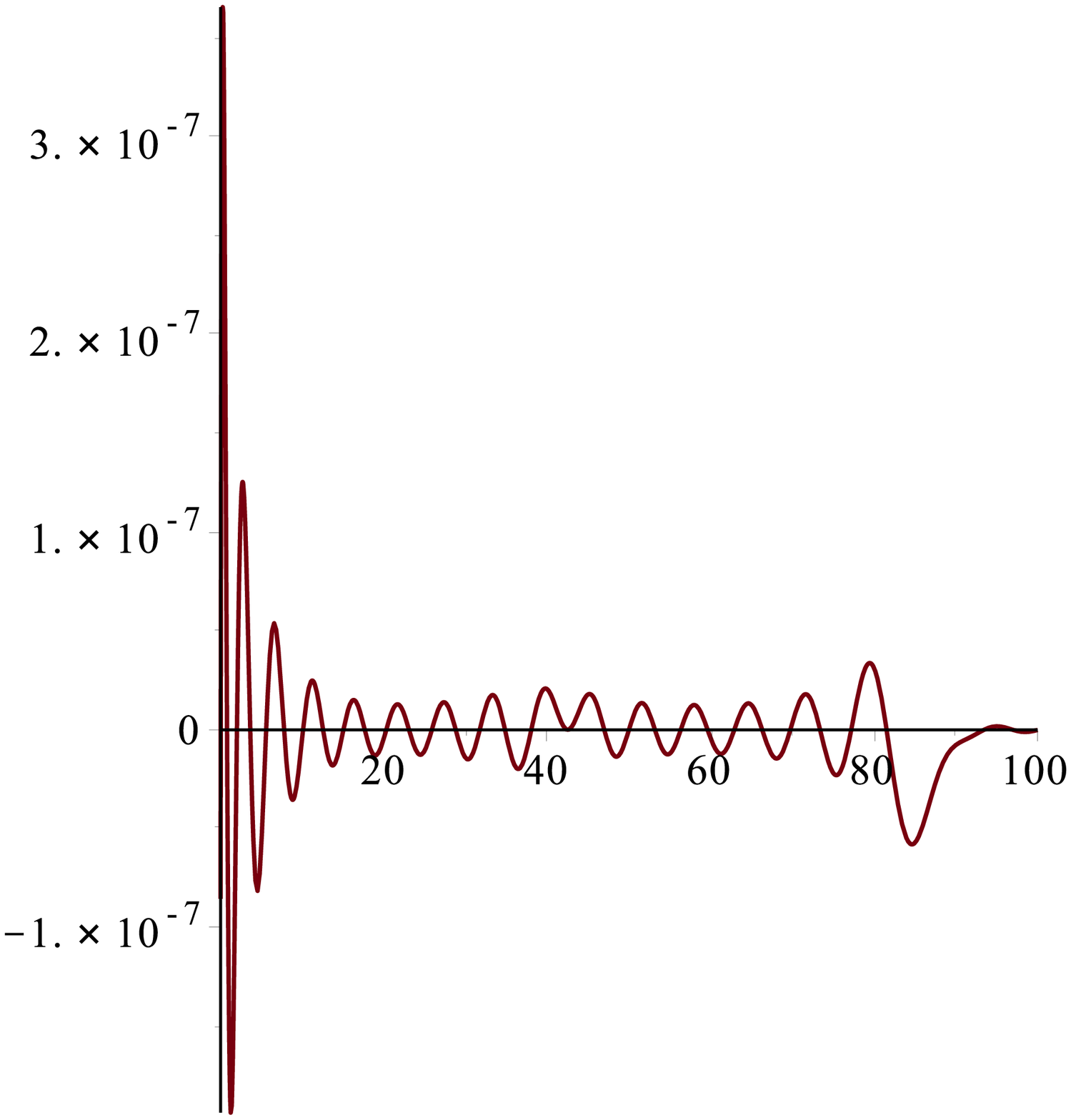}}
	\caption{Approximation of mortality density}
	\label{fig:app}
\end{figure}

\vspace{0.25cm}
{\noindent \bf (i) Survival model.}
Suppose that the variable annuity contract under consideration is issued to a $65$-year-old, whose survival model is determined by the Gompertz-Makeham law of mortality with the probability density given in \eqref{morden} where $x=65, A=0.0007, B=0.00005, c=10^{0.04}.$ Using the Hankel matrix method, we approximate the mortality density by a combination of $M=15$ terms of exponential functions. The bases and weights of the $15$-term exponential sum are shown in Figure \ref{fig:weight}. In Figure \ref{fig:app}, we show the plot of the original density function as well as the error from the $15$-term approximating exponential sum. It is clear from the plots that the maximum error is controlled, \[\sup_{t\in [0,100]} \left|f(t)-\sum^M_{i=1} w_i e^{-s_i t}\right|<10^{-6}.\]

\vspace{0.25cm}
{\noindent \bf (ii) Equity model.} Suppose that the variable annuity contract is invested in a single equity fund which is driven by either of the following two models
	\begin{enumerate}
		\item Geometric Brownian motion (GBM): Here we use a standard model from the insurance industry calibrated to monthly S\&P 500 total return data from December 1955 to December 2003 inclusive. The model is also known to pass the calibration criteria for equity return models set by the AAA (c.f. AAA report \cite[p.35]{AAA}).
 \[ \mu_1=0.064161,  \sigma_1=0.16.\] 
		\item Exponential L\'evy process with bilateral exponential jumps (Kou): we employ two sets of parameters for comparison with the GBM model.
 \begin{align*} &\mbox{(Parameter set A)}\qquad  \mu_2=0.119161,  \sigma_2=0.100499, \lambda=1, p=0.3, \rho=20, \hat{\rho}=10; \\
&\mbox{(Parameter set B)}\qquad \mu_2=0.064186,  \sigma_2=0.144395, \lambda=0.00005, p=0.3, \rho=0.1, \hat{\rho}=0.2.
\end{align*} The parameters are chosen so that the first two moments of $X_1$ are kept the same for both the GBM model and the  Kou model, i.e.
\begin{align*}
\mu_1&=\mu_2+\frac{\lambda p}{\rho}-\frac{\lambda(1-p)}{\hat{\rho}},\\
\sigma_1^2&=\sigma_2^2+\frac{2\lambda p}{\rho^2}+\frac{2\lambda(1-p)}{\hat{\rho}^2}.
	\end{align*} The first set of parameters leads to relatively frequent occurrence of small jumps, whereas the second set of parameters is chosen to exhibit relatively rare occurrence of large jumps. 
	\end{enumerate}

\vspace{0.25cm}
{\noindent \bf (iii)  Fee schedule.} The initial purchase payment is assumed to be $F_0=1$. The guarantee level starts off at $G_0=1$ and the yield rate on the insurer's assets backing up the GMDB liability is given by $r=0.02$. The mortality and expenses (M\&E) fee is charged at the rate of $m=0.01$ per dollar of the policyholder's investment account per time unit. The GMDB rider charge rate is assumed to be $35\%$ of the M\&E fee rate, i.e. $m_d=0.0035$.  

Recall that the GBM model is in fact a special case of the Kou model. Hence we shall first use tail probabilities of the GMDB net liability under the GBM model as benchmarks against which the accuracy of corresponding results under the Kou model can be tested. In Table \ref{tbl:comp}, the last row of tail probabilities are computed by formula \eqref{appx} where $\tilde{\p}(s,K)$ is determined by formulas in Remark \ref{remBM}.  The rest of the table are by formula \eqref{appx} where $\tilde{\p}(s,K)$ is determined by formulas in Corollary \ref{cor_CDF}. For the ease of direct comparison with the GBM model, we set for the Kou model
\[ \mu_2=0.064161,  \sigma_2=0.16, p=0.3, \rho=20, \hat{\rho}=10. \] As expected, Table \ref{tbl:comp} indicates that the tail probability of the GMDB net liability under the Kou model converges point-wise to the corresponding result under the GBM model, as the intensity rate $\lambda$ of jumps declines to zero.

\begin{table}[t] \centering \begin{tabular}{|r|r|r|r|r|r|}
  \hline
  % after \\: \hline or \cline{col1-col2} \cline{col3-col4} ...
   & $\lambda=1$ & $\lambda=0.01$ & $\lambda=0.0001$  & $\lambda=0.000001$  & GBM ($\lambda=0$) \\ \hline \hline
  $\p(L>0.2)$ & $0.4794368114$   & $0.0954727742$& $0.0927572184$ & $0.0927302874$ & $0.0927300396$   \\ \hline
  $\p(L>0.4) $& $0.3313624187$  & $0.03327852158$& $0.03185715421$  & $0.03184312600$ & $0.03184298681$ \\\hline
  $\p(L>0.6)$ & $0.1787553560$  & $0.06201911742$ &$0.005797295345$ & $0.005793340382$ & $0.005793300500$   \\ \hline
\end{tabular}
\caption{Tail probabilities for the GMDB net liability} \label{tbl:comp}
\end{table}

\begin{table}[t] \centering \begin{tabular}{|c|c|c|c|}
  \hline
  % after \\: \hline or \cline{col1-col2} \cline{col3-col4} ...
   & Analytic & Monte Carlo & Monte Carlo \\ 
 &  &  ($N=1,000$) & ($N=100,000$) \\ \hline \hline
  $\p(L>0.2)$ & $0.4794368114$   & $0.4787000000$ & $0.4796620000$    \\
 & & $(0.0154956700)$ & $(0.0015078343)$\\  \hline
 Time & $11.097$ & $68.422203$  & $7107.196853$ \\ \hline
  $\p(L>0.4) $& $0.3313624187$  & $0.3342000000$ & $0.3321305000$   \\
& & $(0.0143218640)$ & $(0.0013534030)$ \\ \hline
 Time & $10.912$ & $-$ & $-$ \\ \hline
  $\p(L>0.6)$ & $0.1787553560$  & $0.1780$ & $0.1794875000$  \\
 & & $(0.0105481353)$ & $(0.0011432358)$ \\ \hline
 Time & $10.463$ & $-$ & $-$ \\ \hline
\end{tabular}
\caption{Tail probabilities for the GMDB net liability with $\lambda=1$} \label{tbl:comp2}
\end{table}

We can also test the accuracy of results on tail probabilites of GMDB net liability against those resulting from a Monte Carlo method. 
Take the case of $\lambda=1$ for example in Table \ref{tbl:comp2}. For the Monte Carlo method,
 we first employ an acceptance-rejection method to generate policyholders' remaining lifetimes from 
 the Gompertiz-Makeham law of mortality in \eqref{morden}. In each experiment, we simulate
  $N$ sample paths of the equity index based on the exponential Levy model from the beginning 
  to policyholders' times of death. Under each
   sample path, we determine the GMDB net liability by the Riemman sum corresponding to \eqref{Lform} 
   with a step size of $0.01$. The GMDB payment is assumed to be payable at the end of the time step upon death. The tail probabilities $\p(L>0.2), \p(L>0.4), \p(L>0.6)$ are estimated respectively by  the number of sample paths under which the GMDB net liability surpasses the thresholds $0.2, 0.4, 0.6$, respectively, divided by the total number of sample paths $N$. In Table \ref{tbl:comp2}, we report tail probability results from both analytic formulas and estimates from Monte Carlo simulations. Computing time is reported in seconds. All algorithms based on the Monte Carlo method are implemented in Matlab (version 2016a) whereas results from analytic formulas are obtained in Maple (version 2016.1). In addition, each Monte Carlo result is the mean of estimates from 20 independent experiments and the corresponding sample standard deviation is quoted in brackets. Observe that Monte Carlo simulations are very time consuming to reach accuracy up to three decimal places. Therefore, it is worthwhile performing the above analysis to develop analytical formulas, as they are in general much more efficient and more accurate than Monte Carlo simulations.

\begin{figure}[t]
	\centering
	\subfigure[All positive liabilities ]{\includegraphics[width=6cm]{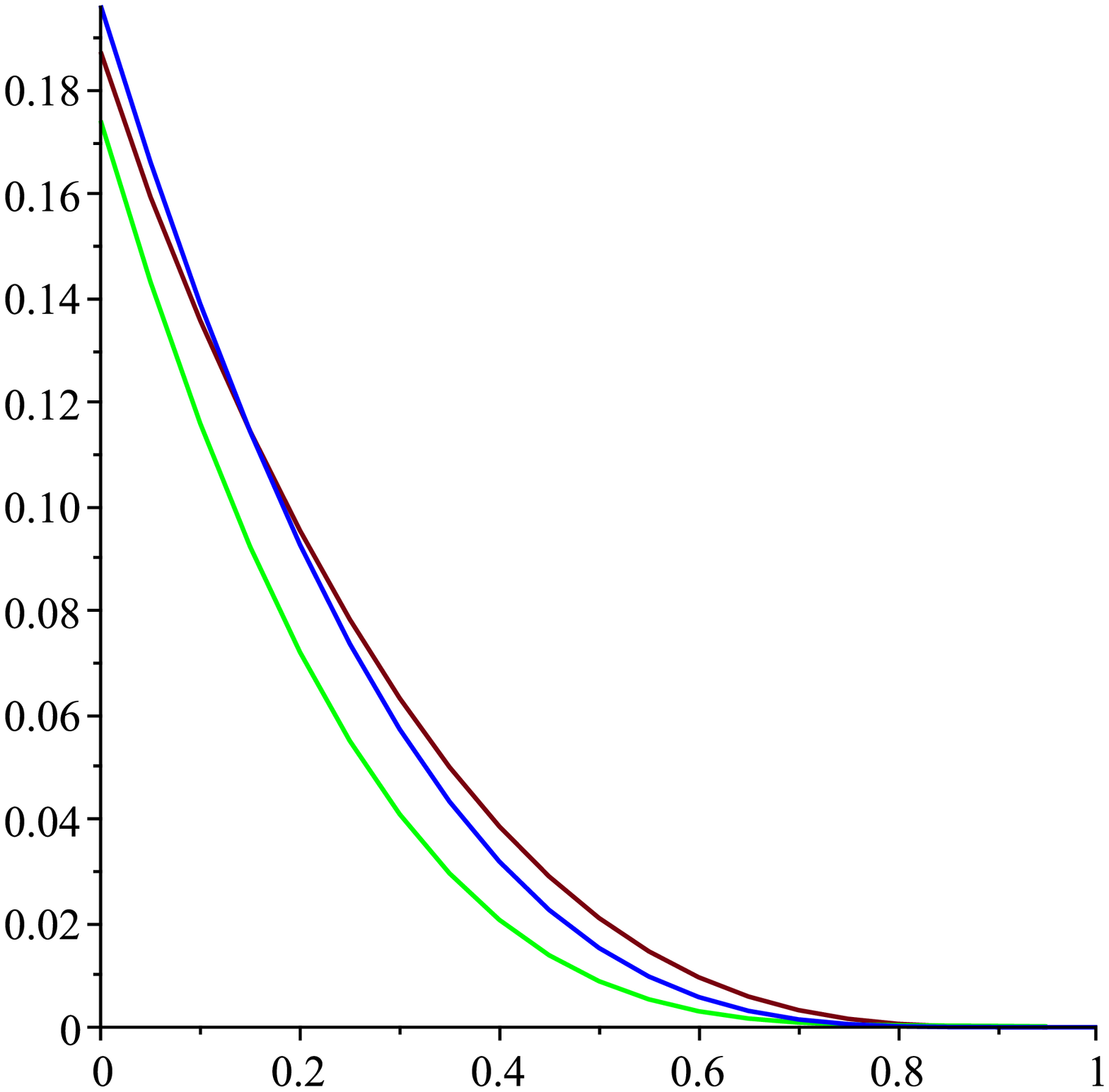}}\qquad
	\subfigure[Extremely large liabilities ]{\includegraphics[width=6cm]{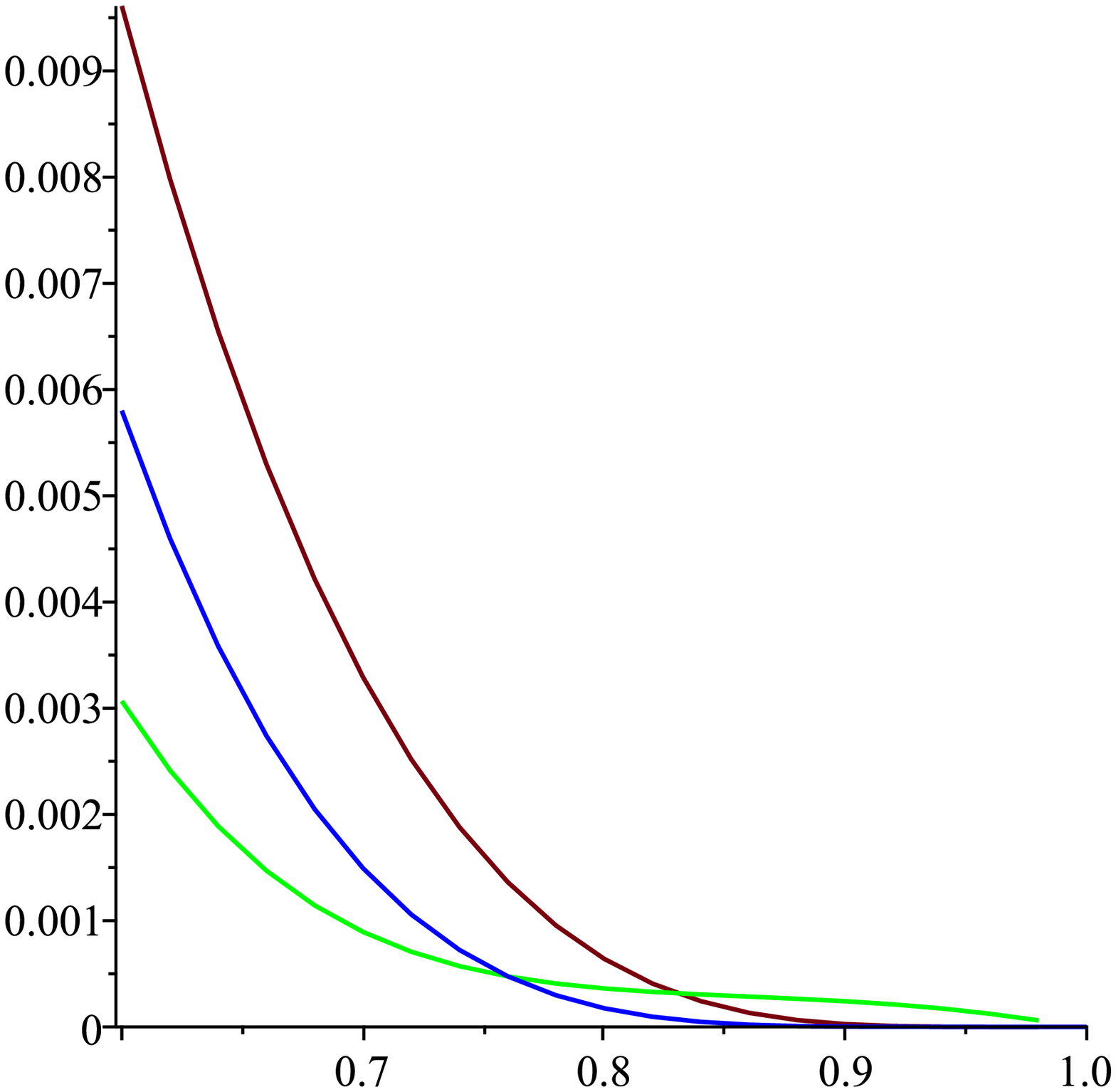}}
	\caption{Tail probability of GMDB net liability}
	\label{fig:comp}
\end{figure}

Owing to the analytical formulas developed in this paper, the computational algorithm for tail probability is very efficient, enabling us to plot the tail probability function. The visualization of tail probabilities allows us to develop an understanding of the impact of jumps to the overall riskiness of insurer's liability. For example, we plot tail probability functions of the GMDB rider under the GBM model and the Kou models. In Figure \ref{fig:comp}, the blue line represents the tail probability function under the GBM model whereas the red line and green line represent the tail probability function under the Kou models with parameter sets A and B respectively. The horizontal axis shows the level of net liability as a percentage of initial purchase payment and the vertical axis measures the corresponding tail probability. Figure \ref{fig:comp}(a) appears to indicate that the models with jumps tend to result in smaller probability of losses (positive net liability), which may be counterintuitive. This is likely caused by the fact that parameter sets A and B for the Kou models introduce smaller volatilities of white noise than that in the GBM model, which implies that larger probability masses are concentrated around negative net liabilities (profits for the insurer). The presence of jumps appears to play a role for generating extremely large liabilities, as shown in Figure \ref{fig:comp}(b). The tail probability in the Kou model with large jumps, represented by the green line, has a fatter tail than that in the GBM model, represented by blue line. The tail probability in the Kou model with smaller jumps, represented by the red line, also has a fatter tail, although to a less extent than the Kou model with large jumps. This is not surprising, as the equity index in Kou models with jumps can drop faster than the GBM can, thereby leading to severe losses for the insurer in extreme cases. This experiment shows that Kou models tend to produce more conservative estimates of insurer's net liabilities at the far right tail than the standard GBM model used in practice.

Next we illustrate the computation of risk measures for the GMDB net liability. The CTE$_{0.9}$ risk measure is commonly used to determine risk-based capitals for variable annuity guarantee products in the US. First we use the expression in \eqref{appx} to determine tail probability of GMDB net liability for various levels and then employ a bisection root search algorithm to determine the exact quantiles. The algorithm terminates when the search interval narrows down to a width less than $10^{-7}$. Then all results in Table \ref{tbl:riskmeas} are rounded to nearest sixth decimal place. Then the VaR results are fed into the algorithm for determining the CTE based on the expression \eqref{appxCTE}. Note that in Table \ref{tbl:riskmeas} both quantile and CTE risk measures at confidence levels $p=0.85, 0.9, 0.95$ for the model with parameter set A are larger than those in the model with parameter set B, which is consistent with the observation in Figure \ref{fig:comp}(a) that tail probability for the model with parameter set A (red line) tends to dominate that for the model with parameter set B (green line). However, if we move to the far right tail, the quantile and CTE risk measures at $p=0.9999$ for the model with parameter set A become less than those for the model with parameter set B, confirmed by the reversed dominance in Figure \ref{fig:comp}(b). Again the comparison of risk measures show that infrequent occurrence of large jumps only increases the tail probability at extremely high levels of liabilities whereas frequent occurrence of small jumps may significantly increase the tail probability at more modest levels of liabilities, which are often of interest to insurance applications.

\begin{table}[t] \centering \begin{tabular}{|r||r|r|r|r|}
  \hline
  % after \\: \hline or \cline{col1-col2} \cline{col3-col4} ...
   & VaR$_{0.85}$ & VaR$_{0.9}$ & VaR$_{0.95}$ & VaR$_{0.9999}$  \\ \hline \hline
 Parameter set A& $0.069344$ &  $0.187615$  & $0.349984$& $0.868025$ \\ \hline
 Parameter set B & $0.038537$ & $0.132969$  & $0.266704$& $0.967712$  \\\hline \hline
    & CTE$_{0.85}$ & CTE$_{0.9}$ & CTE$_{0.95}$& CTE$_{0.9999}$ \\ \hline 
 Parameter set A& $0.295863$ & $0.380809$ & $0.498331$ &  $0.890319$ \\ \hline
 Parameter set B & $0.226736$  & $0.298245$ & $0.401757$  & $0.983389$ \\\hline
\end{tabular}
\caption{Risk measures for the GMDB net liability} \label{tbl:riskmeas}
\end{table}

%\begin{table}[t] \centering \begin{tabular}{|r||r|r|r|r||r|r|r|r|}
%  \hline
%  % after \\: \hline or \cline{col1-col2} \cline{col3-col4} ...
%   & VaR$_{0.85}$ & VaR$_{0.9}$ & VaR$_{0.95}$ & VaR$_{0.9999}$  & CTE$_{0.85}$ & CTE$_{0.9}$ & CTE$_{0.95}$& CTE$_{0.9999}$ \\ \hline \hline
% Parameter set A& $0.069344$ &  $0.187615$  & $0.349984$& $0.868025$& $0.295863$ & $0.380809$ & $0.498331$ &  $0.890319$ \\ \hline
% Parameter set B & $0.038537$ & $0.132969$  & $0.266704$& $0.967712$ & $0.226736$  & $0.298245$ & $0.401757$  & $0.983389$ \\\hline
%\end{tabular}
%\caption{Risk measures for the GMDB net liability} \label{tbl:riskmeas}
%\end{table}

\section*{Acknowledgements}

 The research of A. Kuznetsov was supported by the Natural Sciences and Engineering Research Council of Canada. 

%****************************************************************************************************************
%****************************************************************************************************************
%****************************************************************************************************************

%\bibliographystyle{abbrv}
%\bibliography{references}

%****************************************************************************************************************
%****************************************************************************************************************
%****************************************************************************************************************

\begin{appendices}

    \setcounter{proposition}{0}
    \renewcommand{\theproposition}{\Alph{section}\arabic{proposition}}
    \setcounter{theorem}{0}
    \renewcommand{\thetheorem}{\Alph{section}\arabic{theorem}}
    \setcounter{equation}{0}

\section{Meijer G-function and hypergeometric functions}\label{AppendixA}
    \renewcommand{\theequation}{\Alph{section}.\arabic{equation}}

    In this section we define Meijer G-functions and hypergeometric functions and discuss some of their properties. 
    We begin with four non-negative integers $m$, $n$, $p$ and $q$ and two vectors  
${\mathbf a}=(a_1,\dots,a_p) \in \c^p$ and ${\mathbf b}=(b_1,\dots,b_q) \in \c^q$ and define   for $0\le m \le q, 0 \le n \le p$,
\begin{equation}\label{product_Gammas}
{\mathcal{G}}^{mn}_{pq}\Big( 
\begin{matrix}
{\mathbf a} \\ {\mathbf b}
\end{matrix} \Big  \vert s 
\Big ) : =  \frac{\prod\limits_{j=1}^m \Gamma(b_j+s) \prod\limits_{j=1}^n \Gamma(1-a_j-s)}
{\prod\limits_{j=m+1}^q \Gamma(1-b_j-s) \prod\limits_{j=n+1}^p \Gamma(a_j+s)}. 
\end{equation}
We denote 
\begin{equation}\label{def_bar_a}
\underline{b}(m):=\min\limits_{1\le j \le m} \re(b_j), \;\;\; \bar{a}(n):=\max\limits_{1\le j \le n} \re(a_j), 
\end{equation}
and we set $\underline{b}(0)=+\infty$ and $\bar{a}(0)=-\infty$. When the parameters $m$, $n$, ${\mathbf a}$ and 
${\mathbf b}$ are fixed we will write simply $\underline{b}=\underline{b}(m)$ and 
$\bar{a}=\bar{a}(n)$.

\begin{definition}
Assume that parameters $m, n, p, q, {\bf a}$ and ${\bf b}$ satisfy the following two conditions
\begin{align}
\label{condition_A}
&{\textnormal{Condition A:}} \;\;\;\;\bar{a}-1<\underline{b}  \\
\label{condition_B}
&{\textnormal{Condition B:}}  \;\;\;\; p+q<2m+2n.
\end{align}
We define the {\it Meijer G-function} as follows
\begin{equation}\label{def_Meijer_G}
G^{mn}_{pq}\Big( 
\begin{matrix}
{\mathbf a} \\ {\mathbf b}
\end{matrix} \Big  \vert x 
\Big ):=\frac{1}{2\pi \i} \int_{\lambda + \i \r} 
{\mathcal{G}}^{mn}_{pq}\Big( 
\begin{matrix}
{\mathbf a} \\ {\mathbf b}
\end{matrix} \Big  \vert s 
\Big ) x^{-s} \d s,  
\end{equation}
where $x>0$ and $\lambda \in (-\underline{b}, 1-\bar{a})$. 
\end{definition}

Let us explain why the Meijer G-function is well-defined. The condition \eqref{condition_A} is needed because it separates the poles of $\Gamma(b_j+s)$ from the poles of $\Gamma(1-a_j-s)$ in the numerator in \eqref{product_Gammas}, thus the function 
$s\mapsto {\mathcal{G}}^{mn}_{pq}( {\mathbf a}, {\mathbf b}  | s)$ is analytic in the strip $-\underline b<\re(s)<1-\bar a$. Condition \eqref{condition_B}
and Stirling's asymptotic formula for the Gamma function ensure that the integrand in 
\eqref{def_Meijer_G} converges to zero exponentially fast as $\im(s) \to \infty$, and it is easy to check that \eqref{def_Meijer_G} defines the Meijer G-function as an analytic function in a sector $|\arg(z)|<(m+n-(p+q)/2) \pi$.

\begin{remark}
Our definition of  Meijer G-function is sufficient for our purposes, but it is not the most general possible. One could relax conditions \eqref{condition_A} and \eqref{condition_B} by appropriately deforming the contour of integration in 
\eqref{def_Meijer_G} . See Chapter 8.2 in Prudnikov et al. \cite{Prudnikov_V3} for more details.
\end{remark}

The hypergeometric function is defined as 
\begin{equation}\label{def_pFr}
{}_{p}F_{q} \Big (
\begin{matrix}
a_1, \dots, a_p \\
b_1, \dots, b_q
\end{matrix} \Big \vert 
z \Big):=\sum\limits_{k\ge 0} \frac{(a_1)_k \dots (a_p)_k}{(b_1)_k \dots (b_q)_k} \times \frac{z^k}{k!}, 
\end{equation}
where $(a)_k:=\Gamma(a+k)/\Gamma(a)$ is the Pochhammer symbol. We will also work with the regularized hypergeometric function 
\begin{equation}\label{def_pPhir}
{}_{p}\Phi_{q} \Big (
\begin{matrix}
a_1, \dots, a_p \\
b_1, \dots, b_q
\end{matrix} \Big \vert 
z \Big)= \Gamma\Big[ \begin{array}{c}
a_1, \dots, a_p \\
b_1, \dots, b_q
\end{array} \Big] {}_{p}F_{q} \Big (
\begin{matrix}
a_1, \dots, a_p \\
b_1, \dots, b_q
\end{matrix} \Big \vert 
z \Big).
\end{equation}

We record here some properties of Meijer G-function, which were used elsewhere in this paper. These properties and many other results on Meijer G-functions can be found in Gradshteyn and Ryzhik
\cite{Jeffrey2007}. In Chapter 8.4 in Prudnikov et al. \cite{Prudnikov_V3} one can find an extensive collection of formulas expressing various special functions in terms of Meijer G-functions.  
\begin{itemize}
\item[(i)] 
\begin{equation}\label{Meijer_G_times_x_power_c}
x^c G^{mn}_{pq}\Big( 
\begin{matrix}
{\mathbf a} \\ {\mathbf b}
\end{matrix} \Big  \vert x 
\Big )=G^{mn}_{pq}\Big( 
\begin{matrix}
{\mathbf a}+c \\ {\mathbf b}+c
\end{matrix} \Big  \vert x 
\Big ). 
\end{equation}
\item[(ii)] 
\begin{equation}\label{Meijer_G_x_to_1/x}
G^{mn}_{pq}\Big( 
\begin{matrix}
{\mathbf a} \\ {\mathbf b}
\end{matrix} \Big  \vert x 
\Big )=G^{nm}_{qp}\Big( 
\begin{matrix}
1-{\mathbf b} \\
1- {\mathbf a}
\end{matrix} \Big  \vert x^{-1} 
\Big ). 
\end{equation}

\item[(iii)]
For any $\epsilon>0$
\begin{equation}\label{Meijer_G_asymptotics}
 G^{mn}_{pq}\Big( 
\begin{matrix}
{\mathbf a} \\ {\mathbf b}
\end{matrix} \Big  \vert x 
\Big )= 
\begin{cases}
O(x^{\underline{b}-\epsilon}), \;\;\;\,\; {\textnormal{ as }} \; x\to 0^+,\\
O(x^{\bar{a}-1+\epsilon}), \; {\textnormal{ as }} \; x\to +\infty.
\end{cases}
\end{equation}
 \item[(iv)] 
Assume that $b_j-b_k \notin {\mathbb Z}$ for $1\le j<k \le m$. If $p<q$ or $p=q$ and $|x|<1$ we have 
\begin{align}\label{Meijer_G_in_terms_of_pFq}
G^{mn}_{pq}\Big( 
\begin{matrix}
{\mathbf a}  \\ {\mathbf b} 
\end{matrix} \Big  \vert x 
\Big )& =
\sum\limits_{k=1}^m 
\frac{\prod\limits_{\stackrel{1\le j \le m}{j\ne k}} \Gamma(b_j-b_k) \prod\limits_{j=1}^n \Gamma(1+b_k-a_j)}
{\prod\limits_{j=m+1}^q \Gamma(1+b_k-b_j) \prod\limits_{j=n+1}^p \Gamma(a_j-b_k)} \\
\nonumber
& \qquad \times 
x^{b_k}{}_pF_{q-1}
\Big( 
\begin{matrix}
1+b_k-a_1, \dots, 1+b_k-a_p \\ 1+b_k-b_1,\dots,*,\dots,1+b_k-b_q
\end{matrix} \Big  \vert (-1)^{p-m-n} x 
\Big ), 
\end{align}
where the asterisk in the function ${}_pF_{q-1}$ denotes the omission of the $k$-th parameter. If 
$p>q$ or $p=q$ and $|x|>1$, the corresponding representation of Meijer G-function in terms of  ${}_qF_{p-1}$ functions can be obtained using \eqref{Meijer_G_x_to_1/x} and \eqref{Meijer_G_in_terms_of_pFq}. 
\item[(v)]
If one of the parameter $a_j$ (for $j=1, 2, \cdots, n$) coincides with one of the parameters $b_j$ (for $j=m+1, m+2, \cdots, q$), the order of the G-function decreases. For example
\begin{equation}\label{Meijer_G_reduction}
G^{mn}_{pq}\Big( 
\begin{matrix}
a_1,\cdots,a_p \\ b_1,\cdots, b_{q-1},a_1
\end{matrix} \Big  \vert x 
\Big )=G^{m{, n-1}}_{{ p-1,q-1}}\Big( 
\begin{matrix}
a_2,\cdots,a_p \\ b_1,\cdots,b_{q-1}
\end{matrix} \Big  \vert x 
\Big ). 
\end{equation}
An analogous relationship occurs when one of the parameters $b_j$ (for $j=1, 2, \cdots, m$) coincides with one of the parameters $a_j$ (for $j=n+1, \cdots, p$). In this case, it is $m$ and not $n$ that decreases by one unit.
\begin{equation}\label{Meijer_G_reduction2}
G^{mn}_{pq}\Big( 
\begin{matrix}
a_1,\cdots,a_p \\ a_p, b_2\cdots, b_{q}
\end{matrix} \Big  \vert x 
\Big )=G^{m-1 n}_{ p-1,q-1}\Big( 
\begin{matrix}
a_1,\cdots,a_{p-1} \\ b_1,\cdots,b_{q-1}
\end{matrix} \Big  \vert x 
\Big ). 
\end{equation}
\item[(vi)]
\begin{equation}\label{Meijer_G_definite_integral}
\int_{1}^{\infty}x^{\alpha-1}G^{mn}_{pq}\Big( 
\begin{matrix}
{\mathbf a} \\ {\mathbf b}
\end{matrix} \Big  \vert zx 
\Big )\d x=G^{m+1,n}_{p+1,q+1}\Big( 
\begin{matrix}
{\mathbf a}, 1-\alpha \\ -\alpha, {\mathbf b}
\end{matrix} \Big  \vert z
\Big ). 
\end{equation}
\item[(vii)] {For $p\le q$ and $\mathrm{Re} (\alpha)>0$,}
\begin{equation}\label{pFq_definite_integral}
\int_{0}^1 {x}^{\alpha-1} 
{}_{p}F_{q} \Big (
\begin{matrix}
a_1, \dots, a_p \\
b_1, \dots, b_q
\end{matrix} \Big \vert 
zx \Big)\d x=\alpha^{-1}  \times  
{}_{p+1}F_{q+1} \Big (
\begin{matrix}
\alpha, a_1, \dots, a_p \\
\alpha+1, b_1, \dots, b_q
\end{matrix} \Big \vert 
z \Big). 
\end{equation}

\end{itemize}

\section{The proof of identity \eqref{identity_I1_I2_M}}\label{AppendixB}

We recall that $x>0$, $q>0$, the numbers 
$\{-\hat\zeta_2,-\hat \zeta_1, \zeta_1, \zeta_2\}$ and
$\{-\hat \rho,  \rho\}$
are the roots and the poles of the rational function $\psi(z)-q$ and they are known to satisfy 
the interlacing property
$$
-\hat \zeta_2<-\hat \rho<-\hat \zeta_1 < 0 < \zeta_1 <\rho <\zeta_2. 
$$ 
Note that the function $\psi(z)-q$ can be factorized as follows
\begin{equation}\label{psi_factorization}
\psi(z)-q=A \frac{(z-\zeta_1)(z-\zeta_2)(z+\hat \zeta_1)(z+\hat \zeta_2)}{(z-\rho)(z+\hat \rho)}. 
\end{equation}
where $A:=\sigma^2/2$. This fact (and the result $\psi(0)=0$) implies 
\begin{equation}
q=A\frac{\zeta_1 \zeta_2 \hat \zeta_1 \hat \zeta_2}{\rho \hat \rho},
\end{equation}
and 
\begin{align}
\psi'(\zeta_1)&=A \frac{(\zeta_1-\zeta_2)(\zeta_1+\hat \zeta_1)(\zeta_1+\hat \zeta_2)}{(\zeta_1-\rho)(\zeta_1+\hat \rho)}, \\
\psi'(\zeta_2)&=A \frac{(\zeta_2-\zeta_1)(\zeta_2+\hat \zeta_1)(\zeta_2+\hat \zeta_2)}{(\zeta_2-\rho)(\zeta_2+\hat \rho)}. 
\end{align}
Finally, we recall that we work under the following assumptions 
$$
\zeta_2-\zeta_1 \notin {\mathbb N}, \qquad \hat \zeta_2-\hat \zeta_1 \notin {\mathbb N}, \qquad 0 \vee (1-\hat \zeta_1) < \re(s)< 1.
$$ 
Our goal is to the identity \eqref{identity_I1_I2_M}, that we reproduce here for convenience:
\begin{equation}\label{identity_I1_I2_M2}
I_1(s)+I_2(s)-{\mathcal M}_{x,q}(s)=0.
\end{equation}
We remind that we denoted 
\begin{align}\label{def_I1}
\nonumber
I_1(s)&=\Big \{q A^{-\hat \zeta_1} x^{s-\hat \zeta_1}\frac{\sin(\pi(\hat \rho-\hat \zeta_1))}{\sin(\pi(\hat \zeta_2-\hat \zeta_1))}
{}_3\Phi_3 \Big( \begin{array}{c}
\hat \zeta_1, 1+\hat \zeta_1+\rho, 1+\hat \zeta_1-\hat \rho \\
1+\hat \zeta_1-\hat \zeta_2, 1+\hat \zeta_1+\zeta_1, 1+\hat \zeta_1+\zeta_2
\end{array} \Big \vert \frac{1}{Ax} \Big) 
 \\ 
&\qquad \qquad\qquad \qquad\qquad \;\;\; \times
G_{4,5}^{4,1}
\Big( \begin{array}{c}
 1-\hat \rho, 1, 1+\rho, s+1 \\
 s, 1+\zeta_1, 1+\zeta_2, 1-\hat \zeta_1, 1-\hat \zeta_2
\end{array} \Big \vert \frac{1}{Ax} \Big)\Big\}\\ \nonumber
\nonumber &+ \Big\{{\textnormal{the same expression with $\hat \zeta_1$ and  $\hat \zeta_2$ interchanged}}\Big\}, 
\end{align}
and
\begin{align}\label{def_I2}
I_2(s)&=\Big\{\frac{qx^{\zeta_1}+\zeta_1 {\mathcal M}_{x,q}(\zeta_1)}{\psi'(\zeta_1)}
\frac{x^{s-1-\zeta_1}}{1+\zeta_1-s} 
{}_4F_4 \Big( \begin{array}{c}
1+\zeta_1-s, 1+\zeta_1, 1+\zeta_1-\rho, 1+\zeta_1+\hat \rho \\ 
2+\zeta_1-s, 1+\zeta_1-\zeta_2, 1+\zeta_1+\hat \zeta_1, 1+\zeta_1+\hat \zeta_2 
\end{array} \Big \vert -\frac{1}{Ax} \Big)\Big\} \\  \nonumber
\nonumber
&+ \Big\{{\textnormal{the same expression with $\zeta_1$ and  $\zeta_2$ interchanged}}\Big\}, 
\end{align}
and we have computed earlier
\begin{align}\label{eqn_formula_Mxq2}
{\mathcal M}_{x,q}(s)&=q A^{-s} 
\Gamma\Big[ \begin{array}{c}
1+\zeta_1-s, \; 1+\zeta_2-s, \; \hat \rho+s \\
1-s, \; 1+\rho-s, \;\hat \zeta_1+s, \;\hat \zeta_2+s 
\end{array} \Big]\\
&\times 
G_{4,5}^{3,3}
\Big( \begin{array}{c}
1-s, 1, -\rho, \hat \rho \\
1-s, \hat \zeta_1, \hat \zeta_2, -\zeta_1,-\zeta_2 
\end{array} \Big \vert \frac{1}{Ax} \Big).
\nonumber
\end{align}

Our main tool will be formula \eqref{Meijer_G_in_terms_of_pFq}, which expresses Meijer G-function as a sum of hypergeometric functions. 
The following variant of \eqref{Meijer_G_in_terms_of_pFq} also will be used frequently: 
\begin{align}\label{Meijer_G_in_terms_of_pPhiq}
G^{mn}_{pq}\Big( 
\begin{matrix}
{\mathbf a}  \\ {\mathbf b} 
\end{matrix} \Big  \vert x 
\Big )& = \pi^{m+n-p-1}
\sum\limits_{k=1}^m 
\frac{\prod\limits_{j=n+1}^p \sin(\pi (a_j-b_k))}
{\prod\limits_{\stackrel{1\le j \le m}{j\ne k}} \sin(\pi(b_j-b_k))} \\
\nonumber
& \qquad \times 
x^{b_k}{}_p\Phi_{q-1}
\Big( 
\begin{matrix}
1+b_k-a_1, \dots, 1+b_k-a_p \\ 1+b_k-b_1,\dots,*,\dots,1+b_k-b_q
\end{matrix} \Big  \vert (-1)^{p-m-n} x 
\Big ), 
\end{align}
where we assume that $b_j-b_k \notin {\mathbb Z}$ for $1\le j<k \le m$ and $p<q$. 
This formula can be easily derived from \eqref{Meijer_G_in_terms_of_pFq} using {\it the reflection formula} for the Gamma function:
\begin{equation}\label{gamma_reflection}
\Gamma(z)\Gamma(1-z) = \frac{\pi}{\sin(\pi z)}. 
\end{equation}

\vspace{0.25cm}
\noindent
{\bf Proof  of  the identity \eqref{identity_I1_I2_M2}.}
As the proof will be rather technical and will involve many tedious computations, let us explain the main steps and ideas behind the proof. 
The first step is to express all Meijer G-functions appearing in \eqref{def_I1}, \eqref{def_I2} and \eqref{eqn_formula_Mxq2} in terms of hypergeometric functions
via \eqref{Meijer_G_in_terms_of_pFq} or \eqref{Meijer_G_in_terms_of_pPhiq}. In the second step we will use the results of step one and we will rewrite the expression in \eqref{identity_I1_I2_M2} as a sum of products of two hypergeometric functions. 
In the third step our goal is to simplify the expression obtained in step two. In the fourth step we will show that the resulting (simplified) identity is true because it is a special case of a more general result \cite[Theorem 1]{FKY2015}. 

Let us deal with the first step -- expressing Meijer G-functions in terms of hypergeometric functions.  

\vspace{0.25cm}
\noindent
{\bf Step 1a. }
We define 
\begin{align*}
f_1&:={}_4\Phi _4 \Big( \begin{array}{c}
1, 1-s, 2+\rho-s, 2-\hat \rho-s \\
2-s-\hat \zeta_1, 2-s-\hat \zeta_2, 2-s+\zeta_1, 2-s+\zeta_2
\end{array} \Big \vert \frac{1}{Ax} \Big), \\
f_2&:={}_3\Phi _3 \Big( \begin{array}{c}
\hat\zeta_1,1+\hat\zeta_1+\rho,1+\hat\zeta_1-\hat\rho \\ 1+\hat\zeta_1-\hat\zeta_2,1+\hat\zeta_1+\zeta_1,1+\hat\zeta_1+\zeta_2
\end{array} \Big \vert \frac{1}{Ax} \Big),\\
f_3&:={}_3\Phi _3 \Big( \begin{array}{c}
\hat\zeta_2,1+\hat\zeta_2+\rho,1+\hat\zeta_2-\hat\rho \\ 
1+\hat\zeta_2-\hat\zeta_1,1+\hat\zeta_2+\zeta_1,1+\hat\zeta_2+\zeta_2
\end{array} \Big \vert \frac{1}{Ax} \Big),
\end{align*}
and 
\begin{align*}
a_1:=-\pi q A^{-s} \Gamma
\Big[ \begin{array}{c}
1+\zeta_1-s, 1+\zeta_2-s, \hat\rho+s \\ 
1-s, 1+\rho-s, \hat\zeta_1+s, \hat\zeta_2+s
\end{array} \Big] 
\frac{\sin(\pi(\hat\rho+s))}{\sin(\pi (\hat\zeta_1+s))\sin(\pi(\hat\zeta_2+s))} (Ax)^{s-1},\\
a_2:=\pi q A^{-s} \Gamma
\Big[ \begin{array}{c}
1+\zeta_1-s, 1+\zeta_2-s, \hat\rho+s \\ 
1-s, 1+\rho-s, \hat\zeta_1+s, \hat\zeta_2+s
\end{array} \Big] 
\frac{\sin(\pi(\hat\rho-\hat\zeta_1))}{\sin(\pi (s+\hat\zeta_1))\sin(\pi(\hat\zeta_2-\hat\zeta_1))} (Ax)^{-\hat\zeta_1}, \\
a_3:=\pi q A^{-s} \Gamma
\Big[ \begin{array}{c}
1+\zeta_1-s, 1+\zeta_2-s, \hat\rho+s \\ 
1-s, 1+\rho-s, \hat\zeta_1+s, \hat\zeta_2+s
\end{array} \Big] 
\frac{\sin(\pi(\hat\rho-\hat\zeta_2))}{\sin(\pi (s+\hat\zeta_2))\sin(\pi(\hat\zeta_1-\hat\zeta_2))} (Ax)^{-\hat\zeta_2}. 
\end{align*}
Then formulas \eqref{eqn_formula_Mxq2} and \eqref{Meijer_G_in_terms_of_pPhiq} give us 
\begin{equation}\label{formula1}
{\mathcal M}_{x,q}(s)=a_1 f_1+a_2 f_2+a_3f_3. 
\end{equation}

\vspace{0.25cm}
\noindent
{\bf Step 1b. }
We define $f_4:=f_1 \big \vert_{s=\zeta_1}$ and $f_5:=f_1 \big \vert_{s=\zeta_2}$, that is 
\begin{align*}
f_4&:={}_4\Phi _4 \Big( \begin{array}{c}
1,1-\zeta_1,2+\rho-\zeta_1,2-\hat\rho-\zeta_1 \\
2,2-\zeta_1-\hat\zeta_1,2-\zeta_1-\hat\zeta_2,2-\zeta_1+\zeta_2
\end{array} \Big \vert \frac{1}{Ax} \Big)\\
f_5&:={}_4\Phi _4 \Big( \begin{array}{c}
1,1-\zeta_2,2+\rho-\zeta_2,2-\hat\rho-\zeta_2 \\ 2,2-\zeta_2-\hat\zeta_1,2-\zeta_2-\hat\zeta_2,2-\zeta_2+\zeta_1
\end{array} \Big \vert \frac{1}{Ax} \Big).
\end{align*}
In the same way we define
\begin{align*}
b_1:=a_1 \big \vert_{s=\zeta_1}, \qquad b_2:=a_2 \big \vert_{s=\zeta_1}, 
\qquad b_3:=a_3 \big \vert_{s=\zeta_1},  \\
c_1:=a_1 \big \vert_{s=\zeta_2}, \qquad c_2:=a_2 \big \vert_{s=\zeta_2}, 
\qquad c_3:=a_3 \big \vert_{s=\zeta_2}.  
\end{align*}
Then \eqref{formula1} gives us
\begin{align}\label{formula2}
{\mathcal M}_{x,q}(\zeta_1)&=b_1f_4+b_2f_2+b_3f_3,\\ \nonumber
{\mathcal M}_{x,q}(\zeta_2)&=c_1f_5+c_2f_2+c_3f_3. 
\end{align}

\vspace{0.25cm}
\noindent
{\bf Step 1c. }
We define 
\begin{align*}
f_6&:={}_4\Phi _4 \Big( \begin{array}{c}
1+\zeta_1+\hat\rho,1+\zeta_1,1+\zeta_1-\rho,1+\zeta_1-s,\\ 
2+\zeta_1-s,1+\zeta_1-\zeta_2,1+\zeta_1+\hat\zeta_1,1+\zeta_1+\hat\zeta_2
\end{array} \Big \vert -\frac{1}{Ax} \Big), \\
f_7&:={}_4\Phi _4 \Big( \begin{array}{c}
1+\zeta_2+\hat\rho,1+\zeta_2,1+\zeta_2-\rho,1+\zeta_2-s \\
2+\zeta_2-s,1+\zeta_2-\zeta_1,1+\zeta_2+\hat\zeta_1,1+\zeta_2+\hat\zeta_2
\end{array} \Big \vert -\frac{1}{Ax} \Big),\\
f_8&:={}_4\Phi _4 \Big( \begin{array}{c}
1+\hat\rho-\hat\zeta_1,1-\hat\zeta_1,1-\rho-\hat\zeta_1,1-s-\hat\zeta_1 \\
2-\hat\zeta_1-s,1-\hat\zeta_1-\zeta_1,1-\hat\zeta_1-\zeta_2,1+\hat\zeta_2-\hat\zeta_1
\end{array} \Big \vert -\frac{1}{Ax} \Big),
\end{align*}
and 
\begin{align*}
&d_1:=-
\frac{\sin(\pi \zeta_1)\sin(\pi(\rho- \zeta_1))}{\sin(\pi (\zeta_2-\zeta_1))\sin(\pi(\hat\zeta_1+\zeta_1))} (Ax)^{-\zeta_1-1},\\
&d_2:=-
\frac{\sin(\pi \zeta_2)\sin(\pi(\rho- \zeta_2))}{\sin(\pi (\zeta_1-\zeta_2))\sin(\pi(\hat\zeta_1+\zeta_2))} (Ax)^{-\zeta_2-1},\\
&d_3:=-
\frac{\sin(\pi \hat \zeta_1)\sin(\pi(\rho+ \hat\zeta_1))}{\sin(\pi (\zeta_1+\hat \zeta_1))\sin(\pi(\zeta_2+\hat\zeta_1))} (Ax)^{\hat\zeta_1-1},\\
&d_4:=
\Gamma
\Big[ \begin{array}{c}
1+\zeta_1-s,1+\zeta_2-s,1-\hat\zeta_1-s,s+\hat\rho \\ 
s+\hat\zeta_2, 1-s, 1+\rho-s
\end{array} \Big] (Ax)^{-s}.
\end{align*} 
Then formulas \eqref{Meijer_G_in_terms_of_pFq} and \eqref{Meijer_G_in_terms_of_pPhiq} give us 
\begin{equation}\label{formula3}
G_{4,5}^{4,1}
\Big( \begin{array}{c}
 1-\hat \rho, 1, 1+\rho, s+1 \\
 s, 1+\zeta_1, 1+\zeta_2, 1-\hat \zeta_1, 1-\hat \zeta_2
\end{array} \Big \vert \frac{1}{Ax} \Big)=d_1 f_6+d_2 f_7+d_3 f_8+d_4. 
\end{equation} 

\vspace{0.25cm}
\noindent
{\bf Step 1d. }
We define 
$$
f_9:={}_4\Phi _4 \Big( \begin{array}{c}
1+\hat\rho-\hat\zeta_2,1-\hat\zeta_2,1-\rho-\hat\zeta_2,1-s-\hat\zeta_2 \\
2-\hat\zeta_2-s,1-\hat\zeta_2-\zeta_1,1-\hat\zeta_2-\zeta_2,1+\hat\zeta_1-\hat\zeta_2
\end{array} \Big \vert -\frac{1}{Ax} \Big),
$$
and 
\begin{align*}
&e_1:=-
\frac{\sin(\pi \zeta_1)\sin(\pi(\rho- \zeta_1))}{\sin(\pi (\zeta_2-\zeta_1))\sin(\pi(\hat\zeta_2+\zeta_1))} (Ax)^{-\zeta_1-1},\\
&e_2:=-
\frac{\sin(\pi \zeta_2)\sin(\pi(\rho- \zeta_2))}{\sin(\pi (\zeta_1-\zeta_2))\sin(\pi(\hat\zeta_2+\zeta_2))} (Ax)^{-\zeta_2-1},\\
&e_3:=-
\frac{\sin(\pi \hat \zeta_2)\sin(\pi(\rho+ \hat\zeta_2))}{\sin(\pi (\zeta_1+\hat \zeta_2))\sin(\pi(\zeta_2+\hat\zeta_2))} (Ax)^{\hat\zeta_2-1},\\
&e_4:=
\Gamma
\Big[ \begin{array}{c}
1+\zeta_1-s,1+\zeta_2-s,1-\hat\zeta_2-s,s+\hat\rho \\ 
s+\hat\zeta_1, 1-s, 1+\rho-s
\end{array} \Big] (Ax)^{-s}.
\end{align*} 
Then formulas \eqref{Meijer_G_in_terms_of_pFq} and \eqref{Meijer_G_in_terms_of_pPhiq} give us 
\begin{equation}\label{formula4}
G_{4,5}^{4,1}
\Big( \begin{array}{c}
 1-\hat \rho, 1, 1+\rho, s+1 \\
 s,1+\zeta_1,1+\zeta_2,1-\hat\zeta_2,1-\hat\zeta_1
\end{array} \Big \vert \frac{1}{Ax} \Big)=e_1f_6+e_2f_7+e_3f_9+e_4. 
\end{equation}

\vspace{0.25cm}
Our next goal is to collect all these formulas and express the functions $I_1(s)$ and $I_2(s)$ as sums of products $f_if_j$. 

\vspace{0.25cm}
\noindent
{\bf Step 2a. }
We define 
\begin{align*}
h_1=q A^{-\hat\zeta_1} x^{s-\hat\zeta_1}
\frac{\sin(\pi(\hat\rho-\hat\zeta_1))}{\sin(\pi(\hat\zeta_2-\hat\zeta_1))},\\
h_2=q A^{-\hat\zeta_2} x^{s-\hat\zeta_2}
\frac{\sin(\pi(\hat\rho-\hat\zeta_2))}{\sin(\pi(\hat\zeta_1-\hat\zeta_2))},
\end{align*}
and from formulas \eqref{def_I1}, \eqref{formula3} and \eqref{formula4} we obtain
\begin{align}\label{formula6}
I_1(s)&=(h_1d_4)f_2+(h_2e_4)f_3+(h_1d_1)f_2f_6+(h_1d_2)f_2f_7
\\ \nonumber
&+(h_1d_3)f_2f_8+(h_2e_1)f_3f_6+(h_2e_2)f_3f_7+(h_2e_3)f_3f_9.
\end{align}

\vspace{0.25cm}
\noindent
{\bf Step 2b. }
We define 
\begin{align*}
g_1:&=\frac{1}{\psi'(\zeta_1)} \frac{x^{s-1-\zeta_1}}{1+\zeta_1-s} 
\Gamma
\Big[ \begin{array}{c}
2+\zeta_1-s,1+\zeta_1-\zeta_2,1+\zeta_1+\hat\zeta_1,1+\zeta_1+\hat\zeta_2 \\ 
1+\zeta_1+\hat\rho,1+\zeta_1,1+\zeta_1-\rho,1+\zeta_1-s
\end{array} \Big], \\
g_2:&=\frac{1}{\psi'(\zeta_2)} \frac{x^{s-1-\zeta_2}}{1+\zeta_2-s} 
\Gamma
\Big[ \begin{array}{c}
2+\zeta_2-s,1+\zeta_2-\zeta_1,1+\zeta_2+\hat\zeta_1,1+\zeta_2+\hat\zeta_2\\ 
1+\zeta_2+\hat\rho,1+\zeta_2,1+\zeta_2-\rho,1+\zeta_2-s
\end{array} \Big],
\end{align*}
and from formulas  \eqref{def_I2} and \eqref{formula2} we obtain
\begin{align}\label{formula5}
I_2(s)&=(q x^{\zeta_1} g_1)f_6+(qx^{\zeta_2}g_2)f_7+(\zeta_1 b_1g_1)f_4f_6+(\zeta_1b_2g_1)f_2f_6\\ 
\nonumber
&+(\zeta_1b_3g_1)f_3f_6+(\zeta_2c_1g_2)f_5f_7+(\zeta_2c_2g_2)f_2f_7+(\zeta_2c_3g_2)f_3f_7.
\end{align}

\vspace{0.25cm}
\noindent
{\bf Step 2c. } Using all the previous results (formulas \eqref{formula1}, \eqref{formula6} and \eqref{formula5}) we rewrite the identity \eqref{identity_I1_I2_M2} in an equivalent form
\begin{align}\label{formula7}
\nonumber
I_1(s)+I_2(s)-{\mathcal M}_{x,q}(s)&=
(h_1d_4)f_2+(h_2e_4)f_3+(h_1d_1)f_2f_6+(h_1d_2)f_2f_7
\\ \nonumber
&+(h_1d_3)f_2f_8+(h_2e_1)f_3f_6+(h_2e_2)f_3f_7+(h_2e_3)f_3f_9\\
&+(q x^{\zeta_1} g_1)f_6+(qx^{\zeta_2}g_2)f_7+(\zeta_1 b_1g_1)f_4f_6+(\zeta_1b_2g_1)f_2f_6\\ 
\nonumber
&+(\zeta_1b_3g_1)f_3f_6+(\zeta_2c_1g_2)f_5f_7+(\zeta_2c_2g_2)f_2f_7+(\zeta_2c_3g_2)f_3f_7\\ \nonumber
&-(a_1 f_1+a_2 f_2+a_3f_3)=0. 
\end{align}

\vspace{0.25cm}
Now our goal is  to simplify the long sum in \eqref{formula7}. First we will deal with cancellations and then we will use a certain transformation of hypergeometric functions.

\vspace{0.25cm}
\noindent
{\bf Step 3a. } Using  the reflection formula for the Gamma function \eqref{gamma_reflection} we check that 
\begin{align*}
&\zeta_1b_3g_1=-h_2e_1,\\ 
&\zeta_1b_2g_1=-h_1d_1,\\
&\zeta_2c_2g_2=-h_1d_2,\\
&\zeta_2c_3g_2=-h_2e_2,\\
&a_2=h_1d_4,\\
&a_3=h_2e_4.
\end{align*}
These identities allow us to simplify the expression in \eqref{formula7} as follows
\begin{align}\label{formula8}
\nonumber
I_1(s)+I_2(s)-{\mathcal M}_{x,q}(s)&=
(qx^{\zeta_1}+\zeta_1b_1f_4)g_1f_6+(qx^{\zeta_2}+\zeta_2c_1f_5)g_2f_7\\ 
&+(h_1d_3)f_2f_8+(h_2e_3)f_3f_9-a_1f_1. 
\end{align}

\vspace{0.25cm}
\noindent
{\bf Step 3b. }
In this step we will simplify \eqref{formula8} via the following result
\begin{equation}\label{eqn_4F4_3F3}
1+z\Big(\prod_{i=1}^3 \frac{\alpha_i-1}{\beta_i-1}\Big) \times {}_4F _4 \Big( \begin{array}{c}
1, \alpha_1, \alpha_2, \alpha_3 \\
2, \beta_1, \beta_2, \beta_3
\end{array} \Big \vert z \Big)=
{}_3F _3 \Big( \begin{array}{c}
\alpha_1-1, \alpha_2-1, \alpha_3-1 \\
\beta_1-1, \beta_2-1, \beta_3-1
\end{array} \Big \vert z \Big). 
\end{equation}
The above identity can be easily established by comparing the coefficients of the Taylor series of both sides.  Applying identity \eqref{eqn_4F4_3F3} we obtain
\begin{align}
\label{formula9}
qx^{\zeta_1}+\zeta_1b_1f_4&=qx^{\zeta_1}f_{10},\\
\label{formula10}
qx^{\zeta_2}+\zeta_2c_1f_5&=qx^{\zeta_2}f_{11}. 
\end{align}
where 
\begin{align*}
f_{10}&:={}_3F _3 \Big( \begin{array}{c}
-\zeta_1,1+\rho-\zeta_1,1-\hat\rho-\zeta_1 \\
1-\zeta_1-\hat\zeta_1,1-\zeta_1-\hat\zeta_2,1-\zeta_1+\zeta_2
\end{array} \Big \vert \frac{1}{Ax} \Big),\\
f_{11}&:={}_3F _3 \Big( \begin{array}{c}
-\zeta_2,1+\rho-\zeta_2,1-\hat\rho-\zeta_2 \\
1-\zeta_2-\hat\zeta_1,1-\zeta_2-\hat\zeta_2,1-\zeta_2+\zeta_1
\end{array} \Big \vert \frac{1}{Ax} \Big).
\end{align*}
Formulas \eqref{formula8}, \eqref{formula9} and \eqref{formula10} give us an equivalent form of the identity 
$I_1(s)+I_2(s)-{\mathcal M}_{x,q}(s)=0$ as follows
\begin{align}\label{formula11}
(qx^{\zeta_1}g_1)f_6f_{10}
+(qx^{\zeta_2}g_2)f_7f_{11}+(h_1d_3)f_2f_8+(h_2e_3)f_3f_9-a_1f_1=0. 
\end{align}
Now it remains to prove \eqref{formula11}.

\vspace{0.25cm}
\noindent
{\bf Step 4. }
 By simplifying the coefficients 
(again, using the reflection formula for the Gamma function \eqref{gamma_reflection}) one can check that the left-hand side in \eqref{formula11} is a finite (that is, non-infinite) multiple of 
\begin{align*}
H(x):=\sum\limits_{i=1}^{5} 
\frac{(\alpha_i-\rho)(\alpha_i+\hat \rho)}{ \prod\limits_{\stackrel{1\le j \le 5}{j \neq i}} (\alpha_i-\alpha_j)}  &\times 
{}_{4}F_{4} \Big (
\begin{matrix}
1+\alpha_i-\rho, 1+\alpha_i+\hat \rho, 1+\alpha_i, 1+\alpha_i-s \\
1+\alpha_i-\alpha_1, \dots,*,\dots, 1+\alpha_i-\alpha_{5}
\end{matrix} \Big \vert 
-\frac{1}{Ax} \Big)
\\ \nonumber
&\times 
{}_{4}F_{4} \Big (
\begin{matrix}
 1+\rho-\alpha_i,1-\hat \rho-\alpha_i, -\alpha_i, s-\alpha_i \\
1+\alpha_1-\alpha_i, \dots, *, \dots, 1+\alpha_{5}-\alpha_i
\end{matrix} \Big \vert 
\frac{1}{Ax} \Big), 
\end{align*}
where $[\alpha_1, \alpha_2, \alpha_3, \alpha_4, \alpha_5]=[\zeta_1, \zeta_2, -\hat \zeta_1, -\hat \zeta_2, s-1]$
and the asterisk means that the term $1+\alpha_i-\alpha_i$ is omitted. The identity $H(x)\equiv 0$ is a special case of  \cite[Theorem 1]{FKY2015}. To see this,  we should set $p=r=4$ and
\begin{align*}
\{a_i\}_{1\le i \le 5}=\{\zeta_1,\zeta_2,-\hat \zeta_1, -\hat \zeta_2, s-1\}, \qquad
\{b_i\}_{1\le i \le 4}=\{1+\rho, 1-\hat \rho, 0, s\}, \qquad
\{m_i\}_{1\le i \le 4}=\{1,1,0,0\}, 
\end{align*}  
in the notation of \cite[Theorem 1]{FKY2015}.  
\qed

\end{appendices}

\end{document}